\newcommand{\tr}{{\text{tr}}}
\newcommand{\B}{\mathbf}
\newcommand{\BT}[1]{\tilde{\bold{#1}}}
\newtheorem{theorem}{Theorem}
\newtheorem{corollary}{Corollary}[theorem]
\newtheorem{lemma}{Lemma}
\theoremstyle{definition}
\newtheorem*{remark}{Remark}
\numberwithin{equation}{section}
\title{Depolarizing Reference Devices \\in Generalized Probabilistic Theories}
\author[1]{Matthew B. Weiss}
\affil[1]{\emph{QBism Group, University of Massachusetts Boston}, \authorcr \emph{100 Morrissey Boulevard, Boston, MA 02125, USA}}
\date{\today}
\begin{document}
\maketitle

\begin{abstract}
QBism is an interpretation of quantum theory which views quantum mechanics  as standard probability theory supplemented with a few extra normative constraints. The fundamental gambit is to represent states and  measurements, as well as time evolution, with respect to an \emph{informationally complete reference device}. From this point of view, the Born rule appears as a coherence condition on probability assignments across several different experiments which manifests as a deformation of the law of total probability (LTP). In this work, we fully characterize those reference devices for which this deformation takes a ``simplest possible'' (term-wise affine) form.  Working in the framework of generalized probability theories (GPTs), we show that, given any reference measurement,  a set of post-measurement reference states can always be chosen to give its probability rule this very form.  The essential condition is that the corresponding measure-and-prepare channel be depolarizing.  We also relate our construction to Szymusiak and S\l{}omczy\'nski's recently introduced notion of morphophoricity and re-examine critically a matrix-norm-based measure of LTP deformation in light of our results.  What stands out for the QBist project from this analysis is that it is not only the pure form of the Born rule that must be understood normatively, but the constants within it as well. It is they that carry the details of quantum theory.
\end{abstract}

\section{Introduction}

The Born rule is usually understood as a recipe for calculating the probability of a quantum mechanical measurement outcome given two ingredients: an \emph{effect}, that is, a POVM element representing an outcome, and a \emph{state}, a density matrix representing the preparation of a system. Given POVM element $A_i$ and density matrix $\rho$, we have
\begin{align}
P(A_i)&= \tr(A_i \rho).
\end{align}
In contrast, QBists read the Born rule as a normative constraint on an agent's probability assignments for two separate experiments \cite{fuchs2010qbism, fuchs2023qbism, fuchs2009quantumbayesian}. To accomplish this, a QBist specifies an \emph{informationally complete reference device}, consisting of a set of reference effects $\{E_j\}$ and a set of reference states $\{\sigma_j\}$, both of which span the space of operators. Upon obtaining outcome $E_j$, the device prepares the state $\sigma_j$. Informational completeness implies that the probability assignments $P(E_j) = \tr(E_j\rho)$ are completely equivalent to the assignment of a density matrix $\rho$. Indeed, QBists view quantum states doxastically: the formalism centers the fact that quantum states can be viewed quite literally as probability assignments. Not only that but an arbitrary measurement can be fully characterized by an agent's conditional probability assignments $P(A_i|E_j)=\tr(A_i \sigma_j)$. The Born rule then takes the form
\begin{align}
P(A_i)&= \sum_{jk} P(A_i|E_j)\Phi_{jk}P(E_k),
\end{align}
where $\Phi$ is called the \emph{Born matrix}, and depends upon the choice of reference device. From this, it becomes clear how the Born rule can be understood as a coherence condition between probability assignments $P(A_i), P(A_i|E_j)$, and $P(E_i)$. Moreover, if one were to remove $\Phi$ from the equation, one would be left with 
\begin{align}
P(A_i) = \sum_j P(A_i|E_j)P(E_j),
\end{align}
which is just the usual law of total probability (LTP), implying that the Born rule is a kind of \emph{deformation} of the LTP \cite{Ferrie_2011, Ferrie_2008}.

It has long been a source of fascination for QBists that for a particular kind of reference device, the deformation from the LTP takes an alluringly gentle form \cite{irreducible_difference}. If the reference measurement consists of a SIC-POVM in a $d$-dimensional Hilbert space, and the reference states are taken to be proportional to POVM elements, then the Born rule becomes simply
\begin{align}
P(A_i) = \sum_{j=1}^{d^2} P(A_i|E_j)\left[ (d+1)P(E_j) - \frac{1}{d}\right],
\end{align}
an expression which has earned the moniker \emph{Urgleichung}, meaning fundamental equation. That such a form is possible is not at all trivial. A SIC-POVM, or \emph{symmetric informationally complete quantum measurement}, is one whose $d^2$ effects are proportional to pure states forming a regular simplex in quantum state space. Unpacking this, we observe that such a measurement is a MIC, a \emph{minimal} informationally complete measurement, whose elements form a basis: $d^2$ elements is the fewest number of elements an IC measurement can have in quantum theory. Moreover, such a measurement is \emph{unbiased}, meaning that if one assigns the state of maximal ignorance to the input, one ought to assign equal probability to each outcome. The measurement operators are \emph{rank-1}, that is, proportional to projectors onto pure quantum states. The measurement is \emph{equiangular}, which comes from the fact that the effects form a regular simplex. And finally, the overall reference device is a \emph{parallel update} reference device, meaning that the post-measurement reference states are proportional to POVM elements so that one can think about them as having come about by Luders' rule. The equiangularity condition then implies that upon repeating the measurement, one ought to assign probability $\frac{1}{d}$ to obtaining the same outcome, while assigning $\frac{1}{d(d+1)}$ to all other outcomes. Thus SIC reference devices form in a sense the closest analogy to an orthonormal basis measurement while yet being informationally complete.

 In this paper, we work in the broad framework of \emph{generalized probabilistic theories} \cite{muller} and fully characterize those reference devices for which the Born rule can be written in the following general form, which we shall call a \emph{Protourgleichung}:
\begin{align}
P(A_i) = \sum_{j=1}^{d^2} P(A_i|E_j)\Big[ \alpha P(E_j) + (1-\alpha)w_j\Big],
\end{align}
for some constant $\alpha$ and non-negative weights $\sum_i w_i=1$, corresponding to the bias of the reference measurement. We show that in any GPT, if one starts with an arbitrary informationally complete set of reference effects, one can always choose a set of reference states which deliver a Protourgleichung. In fact, the possibility of employing a Protourgleichung is equivalent to the reference device, considered as a measure-and-prepare channel, being depolarizing. Specializing to the case where the depolarizing reference states are taken (essentially) proportional to reference effects, we relate our construction to the recently proposed notion of a \emph{morphophoric measurement} \cite{morphophoric, morpho}, for which the image of the state space within the probability simplex induced by the reference measurement is as nice as possible,---and ultimately to \emph{complex projective 2-designs}, of which SIC-POVMs are a prime example. Finally, we examine critically a measure of LTP deformity introduced in the QBist literature, whereby the difference between the LTP and the Born rule is quantified by $||I-\Phi||$ for some choice of unitarily invariant matrix norm. The Born matrix $\Phi$ is not uniquely defined for overcomplete reference devices, and it turns out that different choices of $\Phi$, each of which acts as an Protourgleichung, may have wildly different values of $||I-\Phi||$; moreover, we show that the $\Phi$ matrix which minimizes $||I-\Phi||$ with respect to the Frobenius norm for unbiased depolarizing reference devices, does not even give rise to a Protourgleichung (except in one special case). The special case turns out to be interesting, however, even as it highlights the limitations of the measure.

The overall message is that Protourgleichung expressions are ubiquitous, both within quantum theory, and beyond. On the one hand, this shows that non-classicality can be almost effortlessly incorporated into probability theory in a remarkably uniform way. On the other hand, the Urgleichung itself remains unique in terms of the constant $\alpha=d+1$ and the number of its reference outcomes, $d^2$. The fact that pure SICs apparently exist in any Hilbert space dimension $d$ is a very remarkable thing indeed \cite{Appleby_2017, Bengtsson_2020}.

\section{Generalized probabilistic theories }
The framework of  \emph{generalized probabilistic theories}, or GPTs, incorporates both classical probability theory and quantum mechanics under its aegis, as well as a vast landscape of alternative theories. The essential idea is to take some convex subset of a vector space as a \emph{state space}; the \emph{effects} are then non-negative linear functionals that act on states to give probabilities $P(E|S)=(E|S)$. The foundational result in the field is known as \emph{Ludwig's embedding theorem} \cite{lami}, which promises that as long as our theory satisfies a few well-motivated prerequisites, we can safely work with a vector space representation. We reproduce a simplified version of it here.
\begin{theorem}
Let $\Sigma$ be the set of all possible states and $\mathcal{E}$ be the set of all possible effects. A theory is a function $\mu: \mathcal{E} \times \Sigma \rightarrow [0, 1]$. Suppose a theory satisfies 3 axioms:
\begin{enumerate}[label=(\roman*)]
	\item If $\forall E \in \mathcal{E} : \mu(E, S_1)=\mu(E, S_2)$, then $S_1=S_2$: if $\forall S \in \Sigma : \mu(E_1, S)=\mu(E_2, S)$, then $E_1=E_2$. In other words, $\mu$ is \emph{separating}: if two states give the same probabilities for all effects, they must be the same state; and if two effects give the same probabilities for all states, they must be the same effect.
	\item There are trivial effects $I$ and $\emptyset$, the former of which gives probability 1 on all valid states, i.e.\  $\forall S \in \Sigma:\mu(I, S)=1$, and the latter of which gives probability 0 on all valid states, i.e\ $\forall S \in \Sigma:\mu(\emptyset, S)=0$. Moreover, there is for every effect an opposite effect $\neg E$ such that $\forall S \in \Sigma: \mu(\neg E, S)=1-\mu(E,S).$
	\item $\exists S, \forall E, S_1, S_2 : \mu(E, S)=p\mu(E, S_1)+(1-p)\mu(E, S_2)$ for $p\in [0,1]$. Similarly,  $\exists E, \forall S, E_1, E_2 : \mu(E, S)=p\mu(E_1, S)+(1-p)\mu(E_2, S)$. In other words, there is always a state which acts as a convex combination of any two states, and an effect which acts as a convex combination of any two effects, with respect to $\mu$.
\end{enumerate}
Then there exists a vector space $V$ such that
\begin{enumerate}[label=(\roman*)]
	\item The state space and effect spaces are convex subsets of the vector space and its dual: $\Sigma \subseteq V, \mathcal{E} \subseteq V^*$. The trivial effects are contained in the effect space: $(\emptyset|, (I| \in \mathcal{E}$. Moreover, opposite effects are always contained in the effect space, and they are given by  $\forall E \in \mathcal{E}: (\neg E|=(I|-(E| \in \mathcal{E}$.
	\item The state space $\Sigma$ forms the base of a cone $C$ of unnormalized states. The base is picked out by the \emph{normalization functional} $(I|S)=1$. The effect space  $\mathcal{E}$ forms the base of the dual cone $C^*$ of effects $(E| \in V^*$ such that $\forall\  |S) \in C: (E|S)\ge 0$.  
	\item $\mu(E, S) = (E|S)$. In other words, the probability function $\mu$ is just given by an effect as a linear functional acting on a state vector.
	\item A measurement consists of a set of effects such that $\sum_i (E_i| = (I|$. A measurement may have a continuum of outcomes, in which case the sum ought to be read as the appropriate integral.
\end{enumerate}	
\end{theorem}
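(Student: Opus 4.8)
The plan is to use Axiom (iii) to manufacture a convex structure directly on $\Sigma$ and $\mathcal{E}$ out of the numbers $\mu(E,S)$, realize $\Sigma$ concretely as a convex set of real-valued functions on $\mathcal{E}$, and take $V$ to be its linear span; everything else is bookkeeping with two evaluation maps. First, fix $S_1,S_2\in\Sigma$ and $p\in[0,1]$. Axiom (iii) supplies a state $S$ with $\mu(E,S)=p\,\mu(E,S_1)+(1-p)\,\mu(E,S_2)$ for all $E$, and Axiom (i) makes it unique, so we may write $S=:pS_1\oplus(1-p)S_2$. Because the evaluation map $\Phi\colon S\mapsto\big(\mu(E,S)\big)_{E\in\mathcal{E}}$ carries $\oplus$ to the honest pointwise convex combination in $[0,1]^{\mathcal{E}}$, all the abstract convex-space identities (idempotency, associativity, the cancellation law) hold for free; running the same argument on the effect halves of Axioms (i) and (iii) makes $\mathcal{E}$ a convex set as well. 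Hence $\Phi$ is an affine injection $\Sigma\hookrightarrow\mathbb{R}^{\mathcal{E}}$, and dually $\Psi\colon E\mapsto\big(\mu(E,S)\big)_{S\in\Sigma}$ is an affine injection $\mathcal{E}\hookrightarrow\mathbb{R}^{\Sigma}$.

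Next I would set $V:=\operatorname{span}_{\mathbb{R}}\Phi(\Sigma)\subseteq\mathbb{R}^{\mathcal{E}}$ and let $(I|$ be evaluation at the unit effect; by Axiom (ii) this is a linear functional on $V$ that equals $1$ on all of $\Phi(\Sigma)$. Then $C:=\mathbb{R}_{\ge 0}\,\Phi(\Sigma)$ is a convex cone (convex because $\Sigma$ is), every nonzero $v\in C$ factors uniquely as $\lambda\,\Phi(S)$ with $\lambda=(I|v)>0$, and therefore $\Phi(\Sigma)=\{\,v\in C:(I|v)=1\,\}$ is a genuine base. Each $E\in\mathcal{E}$ restricts to a linear functional on $V$ — namely evaluation at $E$, well defined precisely because elements of $V$ are by construction functions on $\mathcal{E}$ — and the resulting map $\mathcal{E}\hookrightarrow V^*$ is injective by the effect-separation clause of Axiom (i). Under it $I\mapsto(I|$ and $\emptyset\mapsto 0$, while Axiom (ii) forces $\neg E\mapsto(I|-(E|$; the inequality $\mu\ge 0$ places the image inside the dual cone $C^*=\{\,(E|\in V^*:(E|S)\ge 0\ \ \forall\,|S)\in C\,\}$; the notion of measurement translates to $\sum_i(E_i|=(I|$ as in conclusion (iv); and $\mu(E,S)=(E|S)$ holds by construction. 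Collecting these facts gives all four conclusions.

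The one genuinely delicate point is reading ``$\mathcal{E}$ forms \emph{the} base of $C^*$'' correctly: without a ``no-restriction'' hypothesis a physical theory may realize only some of the mathematically admissible positive functionals, so I would interpret the claim as the statement that $\mathcal{E}$ is a convex subset of the normalized slice $\{\,(E|\in C^*:(I|-(E|\in C^*\,\}$ which contains $(I|$ and $(\emptyset|$ and is closed under $E\mapsto\neg E$ — all of which the construction above already delivers. Recovering the whole dual cone, together with closedness of $C$ and a Banach-space topology on $V$, is exactly what the full Ludwig theorem supplies and what this simplified statement suppresses; outside that caveat the argument is routine manipulation of $\Phi$ and $\Psi$.
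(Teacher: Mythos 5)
The paper never proves this statement: it is reproduced as a simplified form of Ludwig's embedding theorem, with the proof deferred to the cited literature, so there is no internal argument to compare yours against. Your construction is the standard one and is essentially sound: the (intended reading of the) mixing axiom (iii) together with the separation axiom (i) gives a well-defined, unique convex combination on $\Sigma$ and $\mathcal{E}$; the evaluation map $\Phi: S \mapsto \big(\mu(E,S)\big)_{E\in\mathcal{E}}$ is an affine injection; $V=\mathrm{span}\,\Phi(\Sigma)$, $C=\mathbb{R}_{\ge 0}\Phi(\Sigma)$ is a convex cone with base cut out by $(I|v)=1$; and effects act by evaluation, land in $C^*$ by positivity of $\mu$, with $(\emptyset|=0$ and $(\neg E|=(I|-(E|$ forced by axiom (ii) plus linearity on the spanning set $\Phi(\Sigma)$, while conclusions (iii) and (iv) hold by construction and by definition respectively. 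Your caveat is the right one and is in fact the only place where the literal statement outruns the axioms: from (i)--(iii) alone one gets that $\mathcal{E}$ embeds as a convex subset of the order interval between $(\emptyset|$ and $(I|$ inside $C^*$, not that it exhausts $C^*$; that stronger reading needs a no-restriction-type hypothesis, and even then ``base'' must be read loosely, since $(I|$ and $\tfrac{1}{2}(I|$ lie on the same ray, so the effect set is never a base of $C^*$ in the strict cone-theoretic sense (the states-side base claim, which you verify via uniqueness of the factorization $v=\lambda\Phi(S)$ with $\lambda=(I|v)$, is the one that holds literally). The remaining content of the full theorem that your sketch deliberately omits---closure of the cones, the norm/topology on $V$, and the infinite-dimensional functional-analytic machinery---is exactly what the paper's remark says it is suppressing, so nothing essential is missing for the finite-dimensional reading used in the rest of the paper.
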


\begin{remark}
The full theorem is somewhat more technical, and indeed applies even in the case where the vector spaces involved are infinite dimensional. We will confine ourselves to the case where the state and effect spaces live in finite dimensional vector spaces.
\end{remark}
Classical probability theory is an example of a GPT: the state space is the space of probability distributions on $n$ outcomes, i.e.\ the probability simplex, and the effect space is the hypercube dual to the simplex, the space of response functions. The normalization functional $(I|=(1,\dots, 1)$, and so measurements correspond to column stochastic matrices. Quantum theory is also a GPT. The state space of quantum theory is the space of $d \times d$ density matrices, i.e.\ positive semidefinite Hermitian matrices  $\rho \ge 0$ with $\tr(\rho)=1$. In other words, the normalization functional is $\tr(\cdot)$. The effect space consists of positive semidefinite Hermitian matrices $E \ge 0$ with no restriction on their trace. Measurements in quantum theory correspond to \emph{positive operator valued measures} (POVMs), which map events to positive semidefinite operators on Hilbert space or, more simply, which are collections of effects such that $\sum_i E_i=I.$  Both the state space and the effect space consist in PSD matrices: quantum theory is what is known as a \emph{self-dual} theory, for which effects can always be rescaled into states. 

Notice that we can consider quantum states and effects as elements of vector spaces over $\mathbb{R}$. For example, for a qubit, the state space corresponds to the surface and interior of the 2-sphere, which we can see by expanding in the orthogonal basis provided by the Pauli matrices
\begin{align}
\rho &= \frac{1}{2}\Big(I + \tr(\sigma_x\rho) \sigma_x + \tr(\sigma_y\rho)  \sigma_y + \tr(\sigma_z\rho)  \sigma_z\Big).
\end{align}
In this representation, the normalization functional is $(I|=(1,0,0,0)$, and we have 
\begin{align}
 (E|=\frac{1}{2}\begin{pmatrix} \tr(E) & \tr(\sigma_x E) & \tr(\sigma_y E) &\tr(\sigma_z E)\end{pmatrix} && |\rho) = \begin{pmatrix} 	1\\ \tr(\sigma_x \rho)\\\tr(\sigma_y \rho)\\ \tr(\sigma_z \rho) \end{pmatrix}
\end{align}
so that
\begin{align}
P(E)= \tr(E\rho) = (E|\rho).
\end{align}
This same ``Bloch sphere'' construction can be generalized to any dimension $d$ by employing, for example, the generalized Gell-Mann matrices, which are similarly Hermitian, traceless, and orthogonal---and we will take recourse to the analogous representation in non-quantum theories.
  More generally, one can consider GPTs with more exotic state and effect spaces, e.g.\ theories with square state spaces, state spaces corresponding to higher dimensional spheres,  state spaces shaped like ice cream cones, and deduce many illuminating theorems relating their geometries to a diverse set of operationally grounded axioms.

\section{Reference devices}

Let $\{(E_i|\}_{i=1}^n$ be a set of effects which not only correspond to a measurement, but also span the space of effects. Such a measurement is called \emph{informationally complete}: assigning probabilities to the outcomes of such a measurement is equivalent to the assignment of a state. As we've noted before, if the effects form a basis, we'll call such a measurement a \emph{minimal} informationally complete (MIC) measurement\footnote{It has long been known that MICs exist in any GPT \cite{barrett2009finetti}.}. If we gather up the effects as the rows of a matrix
\begin{align}
\B{E}=\begin{pmatrix} (E_1| \\ \vdots \\ (E_n| \end{pmatrix},	
\end{align}
then we can write the linear map from the state space into the probability simplex as 
\begin{align}
P(E) = \B{E}|\rho),
\end{align}
where $P(E)$ is understood as a column vector of probabilities. Similarly, let $\{|S_i)\}_{i=1}^n$ be a set of states which span the state space. Gathering up the states as the columns of a matrix
\begin{align}
\B{S} = \begin{pmatrix} |S_1) & \dots & |S_n) \end{pmatrix},	
\end{align}
we can write the linear map from the effect space into the space of response functions as
\begin{align}
P(S) = (\eta| \B{S},	
\end{align}
where $P(S)$ is understood as a row vector.

An \emph{informationally complete reference device} is a measure-and-prepare device whose effects and states are spanning. First, the device performs the measurement $\{(E_i|\}_{i=1}^n$: upon obtaining the $i$'th outcome, the device prepares the $i$'th state in $\{|S_i)\}_{i=1}^n$. Such a device furnishes a (nearly) probabilistic representation of a GPT. To see this, introduce a \emph{Born matrix} $\Phi$ satisfying the ``Born identity'' \cite{born_identity}
\begin{align}
	\B{S}\Phi \B{E} = I.
\end{align}
Now suppose we have some other measurement $\{(A_i|\}_{i=1}^m$, whose effects we've gathered up into the rows of a matrix $\B{A}$. Using the Born identity, we can rewrite the probabilities $P(A)$ as
\begin{align}
P(A) &= \B{A}|\rho)	\\
&= \B{A}\big[\B{S}\Phi\B{E}\big]|\rho)	\\
&= P(A|E) \Phi P(E)
\end{align}
where $P(A|E)=P(A|S)$ is the conditional probability matrix for the outcomes of $\{(A_i|\}_{i=1}^m$ given the reference preparations $\{|S_i)\}_{i=1}^n$, where we've used the outcomes of $\{(E_i|\}_{i=1}^n$ as a proxy for the state preparation. As  promised, we can use the Born matrix $\Phi$ to transport our probability judgements $P(E)$ with respect to the reference device  to any other measurement. Indeed, instead of working with states and effects at all, we can characterize both preparations and measurements with respect to some chosen reference device. Moreover, since $P(A|E)P(E)$ is just the usual law of total probability, we can think of the Born matrix $\Phi$ as providing a deformation of the LTP. 

How can one find such a matrix $\Phi$? Consider the following:
\begin{theorem}
Given an informationally complete set of states and effects, a Born matrix satisfying $\B{S}\Phi\B{E}=I$ is equivalent to a $\{1\}$-inverse of $P(E|E)=\B{ES}$.
\end{theorem}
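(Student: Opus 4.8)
\section*{Proof proposal}

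The plan is to reduce everything to the rank conditions that informational completeness imposes. Because the reference effects $(E_i|$ span the $d$-dimensional space $V^*$, the matrix $\mathbf{E}$ has rank $d$, i.e.\ full column rank, and hence admits a left inverse $\mathbf{E}^-$ with $\mathbf{E}^-\mathbf{E}=I$; dually, because the reference states $|S_i)$ span $V$, the matrix $\mathbf{S}$ has full row rank and admits a right inverse $\mathbf{S}^-$ with $\mathbf{S}\mathbf{S}^-=I$. (Concretely one may take $\mathbf{E}^-=(\mathbf{E}^{\mathsf T}\mathbf{E})^{-1}\mathbf{E}^{\mathsf T}$ and $\mathbf{S}^-=\mathbf{S}^{\mathsf T}(\mathbf{S}\mathbf{S}^{\mathsf T})^{-1}$, but only the defining identities matter.) These one-sided inverses are the only structural input the argument needs, and the proof then splits into the two implications of the asserted equivalence.

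For the forward direction I would simply conjugate the Born identity by $\mathbf{E}$ and $\mathbf{S}$: if $\mathbf{S}\Phi\mathbf{E}=I$ then
\begin{align}
(\mathbf{ES})\,\Phi\,(\mathbf{ES}) = \mathbf{E}\big(\mathbf{S}\Phi\mathbf{E}\big)\mathbf{S} = \mathbf{E}\mathbf{S},
\end{align}
so $\Phi$ is, by definition, a $\{1\}$-inverse of $P(E|E)=\mathbf{ES}$. For the converse I would strip the outer factors back off using the one-sided inverses from the first step: starting from $(\mathbf{ES})\Phi(\mathbf{ES})=\mathbf{ES}$ and multiplying on the left by $\mathbf{E}^-$ and on the right by $\mathbf{S}^-$, the identities $\mathbf{E}^-\mathbf{E}=I$ and $\mathbf{S}\mathbf{S}^-=I$ collapse both sides to $\mathbf{S}\Phi\mathbf{E}=I$, which is exactly the Born identity. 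Putting the two implications together gives the stated equivalence.

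As a byproduct the argument also settles existence and (non-)uniqueness: every matrix has at least one $\{1\}$-inverse, so a Born matrix always exists, and the set of all Born matrices is precisely the set of $\{1\}$-inverses of $\mathbf{ES}$. When $n=d$ (the MIC case) $\mathbf{ES}$ is square and invertible, so $\Phi=(\mathbf{ES})^{-1}$ is the unique choice; when $n>d$ the matrix $\mathbf{ES}$ is singular and $\Phi$ is genuinely non-unique, which is the ambiguity exploited later in the paper.

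I do not anticipate a serious obstacle here; the one point demanding care is bookkeeping of the two spanning hypotheses---effects spanning $V^*$ yields a \emph{left} inverse of $\mathbf{E}$, states spanning $V$ yields a \emph{right} inverse of $\mathbf{S}$---so that the converse step never silently assumes that $\mathbf{E}$ or $\mathbf{S}$ is square or two-sided invertible in the overcomplete case $n>d$.
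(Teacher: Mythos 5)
Your proposal is correct and follows essentially the same route as the paper's own proof: sandwiching the Born identity by $\B{E}$ and $\B{S}$ for the forward direction, and using the left inverse of $\B{E}$ (full column rank) and right inverse of $\B{S}$ (full row rank) guaranteed by informational completeness to recover $\B{S}\Phi\B{E}=I$ from $P\Phi P=P$ for the converse. The added remarks on existence and non-uniqueness are consistent with the paper's subsequent discussion.
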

\begin{proof}

Multiply the Born identity on the left by $\B{E}$ and on the right by $\B{S}$ to obtain
\begin{align}
	\B{E}\big[\B{S}\Phi \B{E}\big]\B{S} &= \B{ES}\\
	P(E|E) \Phi P(E|E) &= P(E|E)
\end{align}
Here $P(E|E)$ is the \emph{self-conditional probability matrix} containing the probabilities for obtaining a reference outcome given that the reference device's own preparations were routed into it. For simplicity, let $P\equiv P(E|E)$. $P\Phi P = P$ is the defining equation of a species of generalized matrix inverse known as a $\{1\}$-inverse\footnote{For concreteness, all $\{1\}$-inverses can be obtained from the singular value decomposition: the Moore-Penrose pseudoinverse is a special case.} \cite{benisrael_greville}. Denoting the dimensionality of the GPT as $r$, informational completeness implies that $\B{E} \in \mathbb{R}^{n \times r}$ will have full column rank, and thus have a left inverse $\B{E}^L$; similarly, $\B{S} \in \mathbb{R}^{r \times n}$ will have full row rank, and thus have a right inverse $\B{S}^R$. Acting on $P\Phi P=P$  with $\B{E}^L$ on the left and $\B{S}^R$ on the right returns us to $\B{S}\Phi \B{E}=I$. Thus a $\Phi$ satisfying $P\Phi P = P$ will also satisfy the Born identity $\B{S}\Phi \B{E}= I_r$, and vice versa.
\end{proof}

\begin{remark}
In the case of a MIC, $P$ is invertible and so $\Phi= P^{-1}$.	
\end{remark}

Finally, letting $\BT{S}=\B{S}\Phi$ and $\BT{E}=\Phi \B{E}$, we can also write the Born identity as
\begin{align}
\BT{S}\B{E} = I=\B{S}\BT{E},
\end{align}
where we'll call the columns of $\BT{S}$  \emph{dual states} and the rows of $\BT{E}$  \emph{dual effects}. In the language of frame theory \cite{waldron}, we can think of $\B{E}$ as an analysis operator, analyzing states into probabilities, and $\BT{S}$ as a choice of synthesis operator, taking probabilities back to states. Similarly, we can view $\BT{E}$ is an analysis operator into \emph{quasi-probabilities}, which may be negative, even as they still sum to 1; and $\B{S}$ as a choice of synthesis operator, taking quasi-probabilities back to states \cite{born_identity}. In a nod to Wigner, we'll denote quasi-probability vectors as $W(E)=\BT{E}|\rho)$.

\section{The Urgleichung}

In what follows, it will prove useful to bring our states and effects into a standard \emph{Bloch form} such that the normalization functional is $(I|_B= (1,0,\dots, 0)$. This can always be accomplished, as the following lemma shows.
\begin{lemma} 
Any GPT can be brought, not uniquely, into Bloch form. 	
\end{lemma}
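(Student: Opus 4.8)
The plan is to exhibit an explicit linear isomorphism of the ambient vector space $V$ that carries the normalization functional $(I|$ to the coordinate covector $(1,0,\dots,0)$, and then to observe that, being invertible, this map transports the entire GPT --- state cone, effect cone, duality pairing, measurements --- to an equivalent one already in Bloch form. The one fact to record at the outset is that $(I|$ is a \emph{nonzero} element of $V^*$: by axiom (ii) it evaluates to $1$ on every state, and the state space is nonempty, so $(I|$ cannot vanish identically. Hence, writing $r:=\dim V$, its kernel is a hyperplane of dimension $r-1$.

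Next I would build a basis adapted to $(I|$. Choose any $|v_1)\in V$ with $(I|v_1)=1$ --- for instance any normalized state --- and choose a basis $|v_2),\dots,|v_r)$ of $\ker(I|$; together these form a basis of $V$. Let $T$ be the (invertible) matrix whose columns are the $|v_i)$ expressed in the original coordinates. Passing to the new basis sends state vectors by $|S)\mapsto T^{-1}|S)$ and effect covectors by $(E|\mapsto (E|T$, so the pairing $(E|S)$ is unchanged, a measurement $\sum_i(E_i|=(I|$ remains a measurement, the cone $C$ of unnormalized states goes to $T^{-1}C$ and the dual cone to its dual, and --- the point of the whole construction ---
\[
(I|_B = (I|\,T = \big((I|v_1),\,(I|v_2),\,\dots,\,(I|v_r)\big) = (1,0,\dots,0).
\]
The base of the state cone is still cut out by $(I|_B\,|S)_B=1$, which now simply says that the first coordinate equals $1$.

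Non-uniqueness is visible directly from the construction: the only constraint on $|v_1)$ is $(I|v_1)=1$, leaving an $(r-1)$-dimensional affine family of choices, and the basis of $\ker(I|$ may be acted on by any element of $GL(r-1,\mathbb{R})$; equivalently, any two Bloch forms differ by an element of the stabilizer of the covector $(1,0,\dots,0)$ in $GL(r,\mathbb{R})$. There is no genuine obstacle here --- the statement is elementary linear algebra --- and the only point deserving a moment's care is checking that the transformed data really constitutes \emph{the same GPT} and not merely the same underlying vector space: one must note that convexity, the cone/base structure supplied by Ludwig's embedding theorem, and the state--effect duality all transform appropriately under $T$ (states by $T^{-1}$, effects by right multiplication by $T$, which is exactly what preserves the pairing), and this holds because $T$ is a linear isomorphism. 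One could of course impose further normalizations if desired --- for example, translating so that the effect cone is centered in the Bloch-ball picture --- but no such refinement is needed for the lemma as stated.
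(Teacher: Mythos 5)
Your proof is correct and is essentially the paper's argument viewed dually: the paper builds the invertible change of coordinates row-wise, stacking $(I|$ on top of an orthonormal basis of the subspace annihilated by it, while you build it column-wise from a vector with $(I|v_1)=1$ plus a basis of $\ker(I|$, and both note that states and effects transform contragrediently so the pairing, cones, and measurements are preserved. The only cosmetic difference is that the paper phrases the construction via the projectors $\Pi_1=|I)(I|/(I|I)$ and $\Pi_0=I-\Pi_1$ (implicitly using an inner product), whereas your version avoids that choice; the non-uniqueness you identify is the same freedom the paper alludes to.
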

\begin{proof}
 To see this, consider two fundamental projectors
\begin{align}
\Pi_1 = \frac{|I)(I|}{(I|I)} && \Pi_0 = I - \Pi_1.
\end{align}
The former projects onto the ``traceful'' subspace, which is 1-dimensional, and the latter projects onto the ``traceless'' subspace, which is $(r-1)$-dimensional. Here $r$ is the dimension of the GPT. Let $\{b_i\}_{i=1}^{r-1}$ be a set of $r$-vectors which form an orthonormal basis for the traceless subspace. Each will, of course, be orthogonal to $(I|$. Then the \emph{Bloch transformation}
\begin{align}
\mathcal{B} = \begin{pmatrix} (I| \\ b_1 \\ \vdots \\b_{r-1}\end{pmatrix}	
\end{align}
 will be an invertible $r \times r $ matrix which brings a state (not uniquely) to Bloch form
 \begin{align}
 \mathcal{B}|\rho) =  \left(\begin{array}{c}  
1  \\
\hline
\rho_0
\end{array}\right),
 \end{align}
which we write as a block vector. Meanwhile, effects transform as $(\eta|\mathcal{B}^{-1}$ so that all probabilities are preserved, along with convex combinations of states and effects.
\end{proof} 
 
Thus in what follows, we will always assume that $(I|=(1,0, \dots, 0)$, and without loss of generality, we can write an arbitrary measurement as a block (Bloch) matrix
\begin{align}
\B{E}=
\left(\begin{array}{c|ccc}  
& & & \\
w & & \B{Z}^\dagger \\
& & &\\
\end{array}\right)
\end{align}
such that $\sum_{i=1}^n (E_i| = (I|$, or more simply,  $u^\dagger \B{E} = (I|$ where $u^\dagger = (1,\dots, 1)$. Here $w$ is a column vector of weights and $\B{Z}^\dagger \in \mathbb{R}^{n \times (r-1)}$. Notice this implies $\sum_{i=1}^n w_i = 1$, while the columns of $\B{Z}^\dagger$ sum to 0. The weights $w_i$ can be interpreted as \emph{biases}: we have $\B{E}|I) = w$, so that the biases are just the outcome probabilities given the state of maximal ignorance.

With this in hand, the map from the state space into the probability simplex can be written as an \emph{affine transformation} from the traceless subspace instead.
\begin{align}
P(E) &= \B{E}|\rho) \\
&= 	\left(\begin{array}{c|ccc}  
& & & \\
w & & \B{Z}^\dagger \\
& & &\\
\end{array}\right) \left(\begin{array}{c}  
1  \\
\hline
\rho_0
\end{array}\right)\\
&= w + \B{Z}^\dagger \rho_0.
\end{align}
Recall the Born identity $\B{S}\BT{E}=I$, where $\BT{E}$ is the map from states into quasi-probabilities $W(E)=\BT{E}|\rho)$. If we were to demand that the quasi-probability representation be as close as possible to the probability representation, we could do no better than to assume that
\begin{align}
W(E) = w+ \alpha \B{Z}^\dagger \rho_0	
\end{align}
for some choice of a single parameter $\alpha$. Then
\begin{align}
W(E) &= \alpha(P(E)-w) + w\\
&= \alpha P(E) + (1-\alpha)w,
\end{align}
which tells us that a simple mixture of the probabilities and the biases yield quasi-probabilities. Since $\BT{E}=\Phi\B{E}$, we obtain the Protourgleichung
\begin{align}
P(A_i) &= \sum_{jk}P(A_i|E_j)\Phi_{jk} P(E_k)\\
&=	\sum_j P(A_i|E_j)\Big[\alpha P(E_j) + (1-\alpha)w_j\Big].
\end{align}

 We now establish the following theorem:
\begin{theorem}
Given an arbitrary\footnote{For simplicity of exposition, we have confined ourselves to measurements with a finite number of outcomes, although the proof generalizes to measurements with a continuum of outcomes.} informationally complete measurement in a finite dimensional GPT, one can always choose a set of reference states which allows a quasi-probability representation of the form $W(E)=\alpha P(E) + (1-\alpha)w$.
\end{theorem}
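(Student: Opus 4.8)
The plan is to produce the reference states by forcing the measure-and-prepare channel $\B{S}\B{E}=\sum_i|S_i)(E_i|$ to be depolarizing with the maximally mixed state as its fixed point. First I would unwind what the target representation demands: asking for $W(E)=\alpha P(E)+(1-\alpha)w$ is the same as asking for the quasi-probability map to be $\BT{E}=(\,w\mid\alpha\B{Z}^\dagger\,)$ in Bloch form, and substituting this into the Born identity $\B{S}\BT{E}=I$ shows that, writing $\B{S}=\binom{u^\dagger}{\B{S}_0}$, one needs precisely $\B{S}_0w=0$ and $\alpha\,\B{S}_0\B{Z}^\dagger=I_{r-1}$. Equivalently, $\B{S}\B{E}$ must equal $\tfrac1\alpha I+(1-\tfrac1\alpha)|\rho_{\mathrm{mm}})(I|$, the depolarizing channel with parameter $p=1/\alpha$ and fixed point the Bloch centre $|\rho_{\mathrm{mm}})=\binom{1}{0}$. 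So the theorem reduces to realizing such a channel as a measure-and-prepare channel built from the \emph{given} measurement $\B{E}$ together with legitimate states.

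To do this I would use informational completeness of the effects. Since $\{(E_i|\}$ spans the effect space, the operators $|v)(E_i|$ (for $|v)\in V$) span $\mathrm{End}(V)$, so the traceless projector $\Pi_0=I-|\rho_{\mathrm{mm}})(I|$ of the Bloch-form lemma can be written $\Pi_0=\sum_i|c_i^0)(E_i|$; replacing $|c_i^0)$ by $|c_i):=\Pi_0|c_i^0)$ and using $\Pi_0^2=\Pi_0$ and $(I|\Pi_0=0$ yields \emph{traceless} vectors $|c_i)$ still obeying $\sum_i|c_i)(E_i|=\Pi_0$. Then I would set $|S_i):=|\rho_{\mathrm{mm}})+p\,|c_i)$ for a parameter $p\in(0,1]$. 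These are normalized, $(I|S_i)=1$, and feeding $\sum_i|c_i)(E_i|=\Pi_0$ through the Bloch decomposition gives $\B{S}_0w=0$ and $\B{S}_0\B{Z}^\dagger=pI_{r-1}$ --- exactly the two conditions above with $\alpha=1/p$ --- so $\B{S}\BT{E}=I$ holds with $\BT{E}=(\,w\mid\alpha\B{Z}^\dagger\,)$. That $\BT{E}=\Phi\B{E}$ for an actual Born matrix $\Phi$ is then automatic, since $\B{E}$ has full column rank so $\Phi\B{E}=\BT{E}$ is solvable, and any such $\Phi$ is a $\{1\}$-inverse of $P(E|E)=\B{ES}$ by the theorem above; moreover $\B{S}\B{E}=pI+(1-p)|\rho_{\mathrm{mm}})(I|$ is invertible for $p\neq0$, so $\B{S}$ has full rank and the $|S_i)$ span, i.e.\ they form an honest reference device.

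The hard part, and essentially the only one, is positivity: $|S_i)=|\rho_{\mathrm{mm}})+p\,|c_i)$ is a valid state only if it lies in the state cone, and this is exactly what the free parameter $p$ is for. The state space, being full-dimensional and compact, has nonempty interior; by choosing the inner product used in the Bloch-form lemma suitably we may take the Bloch centre $|\rho_{\mathrm{mm}})$ to be an interior state --- equivalently, one assumes the maximally mixed state lies in the interior of $\Sigma$, which is the standard situation. Then for all sufficiently small $p>0$ every $|S_i)$ sits inside a fixed ball about $|\rho_{\mathrm{mm}})$ contained in $\Sigma$, hence is a legitimate state, the construction closes, and the resulting $\alpha=1/p$ is correspondingly large. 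The degenerate case $r=1$ is vacuous and may be set aside; beyond phrasing this positivity / location-of-the-maximally-mixed-state step cleanly, I do not anticipate any real obstacle.
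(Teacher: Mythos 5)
Your proposal is correct and follows essentially the same route as the paper: you reduce the Born identity in Bloch form to the two conditions $\B{X}w=0$ and $\alpha\,\B{X}\B{Z}^\dagger=I_{r-1}$, realize them from informational completeness, and then shrink the states toward the maximally mixed point (large $\alpha$, small $p$) to secure validity, which is exactly the paper's depolarizing construction. Your operator identity $\sum_i|c_i)(E_i|=\Pi_0$ is just a coordinate-free packaging of the paper's choice $\B{X}=\tfrac{1}{\alpha}(\B{Z}^\dagger)^L$ with rows orthogonal to $w$, and your explicit assumption that the maximally mixed state is interior merely makes precise what the paper leaves implicit in ``choose $\alpha$ sufficiently large.''
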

\begin{proof}
Let 
\begin{align}
\B{S} = \left(\begin{array}{ccc}  
1 & \dots & 1 \\
\hline
& \\
& \B{X} \\
& &\\
\end{array}\right)	&& 
\BT{E}=
\left(\begin{array}{c|ccc}  
& & & \\
w & & \alpha\B{Z}^\dagger \\
& & &\\
\end{array}\right).
\end{align}
The Born identity requires that $\B{S}\BT{E}=I$. We have
\begin{align}
 \left(\begin{array}{ccc}  
1 & \dots & 1 \\
\hline
& \\
& \B{X} \\
& &\\
\end{array}\right)\left(\begin{array}{c|ccc}  
& & & \\
w & & \alpha\B{Z}^\dagger \\
& & &\\
\end{array}\right) &= \begin{pmatrix}u^\dagger w & \alpha u^\dagger \B{Z}^\dagger \\
\B{X}w & \alpha \B{X}\B{Z}^\dagger
 \end{pmatrix}\\
 &= \begin{pmatrix}1& 0 \\
\B{X}w & \alpha \B{X}\B{Z}^\dagger
 \end{pmatrix}
\end{align}
where the last follows from $u^\dagger w =1$ and $u^\dagger \B{Z}^\dagger = 0$. To satisfy the Born identity, then, our states $\B{S}$ must satisfy
\begin{align}
\B{X}w=0 && \alpha\B{X}\B{Z}^\dagger = I_{r-1}.	
\end{align}
On the one hand, $\B{Z}^\dagger \in \mathbb{R}^{n\times r-1}_{r-1}$: informational completeness implies that $\B{Z}^\dagger$ has full column rank, and therefore has (not necessarily uniquely) a left inverse $(\B{Z}^\dagger)^L$. This will ultimately furnish our $\B{X}$, but there is nothing as yet guaranteeing that the columns of $\B{S}$ will correspond to valid states in the GPT. In other words, we further require that $(\eta|S_i) \ge 0$ for all effects $(\eta|$ and for each reference state $|S_i)$. In order to accommodate this, we could simply rescale $(\B{Z}^\dagger)^L$ so that $\bold{X}=\frac{1}{\alpha}(\B{Z}^\dagger)^L$ where $\frac{1}{\alpha}$ is chosen so that
\begin{align}
\forall \eta, i: (\eta|S_i) = 	\left(\begin{array}{c|c}  
\eta_1 & \eta_0^\dagger
\end{array}\right)\left(\begin{array}{c}  
1 \\
\hline\\[-10pt]
\frac{1}{\alpha}z_i^L \\
\end{array}\right) &= \eta_1 + \frac{1}{\alpha}\eta_0^\dagger z_i^L \ge 0,
\end{align}
where $z_i^L$ denotes the $i$'th column of $(\B{Z}^\dagger)^L$. Since the weights $\eta_1$ on any effect are non-negative, we can always make $(\eta|S_i)\ge0$ by choosing $\alpha \ge 1$ sufficiently large since any negativity enters in through  $\eta_0^\dagger z_i^L$. That said, there is one subtlety here: it may be that we can take $\alpha \leq -1$ and obtain an alternative set of valid states. Thus we merely require that $|\alpha| \ge 1$.

But we must also have that $\B{X}w=0$: since the rows of $\alpha\B{X}$ form a biorthogonal system with the columns of $\B{Z}^\dagger$, we require that $w$ be linearly independent of the columns of $\B{Z}^\dagger$ in order that the rows of $\B{X}$ can be chosen orthogonal to $w$. But this is clearly the case since the columns of $\B{Z}^\dagger$ all sum to 0, so that no linear combination of them can yield $w$, which sums to 1. 

Putting it all together, we have
\begin{align}
\B{E}=
\left(\begin{array}{c|ccc}  
& & & \\
w & & \B{Z}^\dagger \\
& & &\\
\end{array}\right) && \B{S} = \left(\begin{array}{ccc}  
1 & \dots & 1 \\
\hline
& \\
& \B{X} \\
& &\\
\end{array}\right)
\end{align}
such that
\begin{align}
u^\dagger \B{E}=(I| && \B{X} w = 0 && \alpha \B{X} \B{Z}^\dagger = I_{r-1} 
\end{align}
where the reference effects have been chosen to be informationally complete, and the states, which derive from different choices of left inverses of $\B{Z}^\dagger$, inherit the informational completeness of the effects. We have therefore shown that a set of reference states can always be found regardless of the reference measurement which yields a quasi-probability representation of the desired form since $\BT{E}|\rho)=w+ \alpha \B{Z}^\dagger \rho_0=	\alpha P(E) + (1-\alpha)w$.
\end{proof}

\begin{remark}
The condition $\B{X}w=0$ amounts to $\B{S}w = |I)$: the weighted average of the reference states is the maximally mixed state\footnote{In a Bloch representation, the maximally mixed state $|M)=(I|^\dagger$. In other representations, it may be necessary to renormalize so that $|M)=\frac{|I)}{(I|I)}$.}. In a self-dual theory, and for an unbiased measurement, this implies that the reference states themselves correspond to a measurement. 
\end{remark}

Another perspective on this construction is provided by considering the channel operator corresponding to reference device $\mathcal{C}= \sum_{i=1}^n |S_i)(E_i|$.
\begin{corollary}
	A reference device which furnishes a quasi-probability representation $W(E)=\alpha P(E) + (1-\alpha)w$ corresponds to a depolarizing channel with parameter $\frac{1}{\alpha}$. The converse also holds.
\end{corollary}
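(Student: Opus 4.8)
The plan is to compute the reference device's channel operator, $\mathcal{C}=\sum_{i=1}^{n}|S_i)(E_i|=\B{S}\B{E}$, in the Bloch block form already fixed in the proof of the preceding theorem, and then read off the answer. Writing $\B{E}$ with the bias vector $w$ as its first column and $\B{Z}^\dagger$ as the remaining block, and noting that the columns of $\B{S}$ are normalized states, so that its first row is $u^\dagger=(1,\dots,1)$ sitting above a block $\B{X}$, the product is
\[
\mathcal{C}=\B{S}\B{E}=\begin{pmatrix} u^\dagger w & u^\dagger\B{Z}^\dagger \\ \B{X}w & \B{X}\B{Z}^\dagger \end{pmatrix}=\begin{pmatrix} 1 & 0 \\ \B{X}w & \B{X}\B{Z}^\dagger\end{pmatrix},
\]
the top row being automatic from $\sum_i w_i=1$ and the vanishing column sums of $\B{Z}^\dagger$. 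So the entire content of the corollary sits in the bottom row: in Bloch form the depolarizing map $|\rho)\mapsto \tfrac{1}{\alpha}|\rho)+(1-\tfrac{1}{\alpha})|I)$ is precisely $\operatorname{diag}\!\big(1,\tfrac{1}{\alpha}I_{r-1}\big)$, so $\mathcal{C}$ is depolarizing with parameter $\tfrac{1}{\alpha}$ if and only if $\B{X}w=0$ and $\B{X}\B{Z}^\dagger=\tfrac{1}{\alpha}I_{r-1}$.

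For the forward direction I would start from $W(E)=\alpha P(E)+(1-\alpha)w$, which in Bloch form says exactly that $\BT{E}$ has first column $w$ and remaining block $\alpha\B{Z}^\dagger$. Imposing the Born identity $\B{S}\BT{E}=I$ and reading off its bottom row yields $\B{X}w=0$ and $\alpha\B{X}\B{Z}^\dagger=I_{r-1}$ --- the two conditions that appeared in the theorem --- and feeding these into the display above collapses $\mathcal{C}$ to $\operatorname{diag}\!\big(1,\tfrac{1}{\alpha}I_{r-1}\big)$. For the converse I would run the same identities in reverse: assuming $\mathcal{C}$ is depolarizing with parameter $\tfrac{1}{\alpha}$ forces $\B{X}w=0$ and $\B{X}\B{Z}^\dagger=\tfrac{1}{\alpha}I_{r-1}$ by the display, and these in turn make the matrix with first column $w$ and remaining block $\alpha\B{Z}^\dagger$ --- call it $\BT{E}$ --- satisfy $\B{S}\BT{E}=I$, so that $W(E)=\BT{E}|\rho)=w+\alpha\B{Z}^\dagger\rho_0=\alpha P(E)+(1-\alpha)w$.

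The algebra is routine; the one point that genuinely requires care --- and the closest thing here to an obstacle --- is consistency with the state space. A reference device by definition prepares \emph{valid} states, so not every depolarizing parameter is admissible a priori, and the converse is best read as concerning those depolarizing channels that genuinely arise from a bona fide reference device. The preceding theorem has already pinned this range down to $|\alpha|\ge 1$, equivalently $|1/\alpha|\le 1$. I would flag this explicitly but not belabor it, since the geometric question of which contractive depolarizing channels leave a prescribed state space invariant is a separate matter from the purely algebraic equivalence asserted in the corollary.
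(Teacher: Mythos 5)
Your proposal is correct and follows essentially the same route as the paper: compute the channel operator $\mathcal{C}=\B{S}\B{E}$ in Bloch block form, observe that depolarization with parameter $\frac{1}{\alpha}$ is equivalent to $\B{X}w=0$ and $\B{X}\B{Z}^\dagger=\frac{1}{\alpha}I_{r-1}$, and identify these with the Born-identity conditions that give $W(E)=\alpha P(E)+(1-\alpha)w$. The only difference is that you spell out the converse (which the paper dispatches with ``the argument can be read in reverse'') and explicitly flag the state-validity caveat, both of which are consistent with the paper's treatment.
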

\begin{proof}
Working out	the form of the channel operator, we find
\begin{align}
 \B{S}\B{E} &=\left(\begin{array}{ccc}  
1 & \dots & 1 \\
\hline
& \\
& \B{X}\\
& &\\
\end{array}\right)\left(\begin{array}{c|ccc}  
& & & \\
w & & \B{Z}^\dagger \\
& & &\\
\end{array}\right)\\
&= \begin{pmatrix}u^\dagger w & u^\dagger \B{Z}^\dagger \\ \B{X} w & \B{X} \B{Z}^\dagger \end{pmatrix}= \begin{pmatrix}1 & 0 \\ 0 & \frac{1}{\alpha}I_{r-1}\end{pmatrix}\\
&=\frac{1}{\alpha}I+\left(1-\frac{1}{\alpha}\right) \frac{|I)(I|}{(I|I)}.
\end{align}
This implies that the reference devices acts as a \emph{depolarizing channel} with parameter $\frac{1}{\alpha}$:
\begin{align}
|\rho) \mapsto \frac{1}{\alpha}|\rho)+\left(1-\frac{1}{\alpha}\right)|M).
\end{align}
It is not hard to see that the argument can be read in reverse: if one chooses a set of states such that the reference device forms a depolarizing channel---and one can always make such a choice---then one can employ such a quasi-probability representation. 
\end{proof}

\begin{remark}
It may be useful to see step by step ``why'' the Protourgleichung works. We begin by inserting the Born identity into the expression for $P(A)$,
\begin{align}
P(A) &= \B{A}|\rho)	= \B{A}\left[\B{S}\Phi \B{E}\right] |\rho).
\end{align}
If $\Phi$ acts as an Protourgleichung, we have
\begin{align}
&=\B{A}\B{S}\Big[\alpha \B{E}|\rho)+(1-\alpha)w\Big]=\B{A}\Big[\alpha\B{SE}|\rho) +(1-\alpha)\B{S}w\Big].
\end{align}
But we know that $\B{SE}$ is just the channel operator, which is depolarizing with parameter $\frac{1}{\alpha}$, and $\B{S}w=|M)$, the maximally mixed state. The effect, then, is simply to balance depolarization with repolarization:
\begin{align}
&=\B{A}\Bigg[\alpha\left[ \frac{1}{\alpha}|\rho)+\left(1-\frac{1}{\alpha}\right)|M)\right] +(1-\alpha)|M)\Bigg]\\
&=\B{A}\Big[|\rho)- (1-\alpha)|M) +(1-\alpha)|M)\Big]\\
&= \B{A}|\rho)
\end{align}
	
\end{remark}

\begin{remark}
 The case of $\alpha=1$ is interesting: then $W(E)=P(E)$, so that the Protourgleichung collapses back to the law of total probability. Since $\BT{E}=\B{E}$, the channel operator is simply the identity $\mathcal{C}=\B{SE}=I$, which implies that $P^2=\B{E}\big[ \B{SE}\big] \B{S}=\B{ES}=P$: the self-conditional probability matrix is a projector. Moreover, notice that
\begin{align}
	\forall \eta, \rho: (\eta|\rho) = \Big[(\eta|\B{S}\Big]\Big[\B{E}|\rho)\Big]=P(S)  P(E).
\end{align}
 Thus $\B{S}$ provides a linear map which embeds the effect space into the hypercube dual to the probability simplex, and $\B{E}$ provides a linear map which embeds the state space into the probability simplex itself in such a way that all probabilities $P(\eta|\rho)$ are preserved. We've therefore constructed a \emph{simplex embedding}, or equivalently a non-contextual ontological model \cite{Schmid_2021, selby2022opensource}: in short, if we can take $\alpha=1$, then our GPT must be equivalent to a classical theory, i.e.\ with an epistemic restriction.
\end{remark}

For a depolarizing reference device constructed according to the above prescription, the self-conditional probability matrix takes the general form
\begin{align}
P(E|E) &= \B{ES} = 		\left(\begin{array}{c|ccc}  
& & & \\
w & & \B{Z}^\dagger \\
& & &\\
\end{array}\right)\left(\begin{array}{ccc}  
1 & \dots & 1 \\
\hline
& \\
& \B{X} \\
& &\\
\end{array}\right)\\
&= wu^\dagger + \B{Z}^\dagger \B{X}.
\end{align}
We then have the following theorem:
\begin{theorem}
\label{equiangular}
The self-conditional probability matrix $P(E|E)$ of a depolarizing reference device whose measurement is an unbiased MIC is equiangular.	
\end{theorem}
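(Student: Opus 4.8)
The plan is to evaluate the self-conditional probability matrix $P(E|E)=\B{ES}=wu^\dagger+\B{Z}^\dagger\B{X}$ directly, using the constraints $\alpha\B{X}\B{Z}^\dagger=I_{r-1}$ and $\B{X}w=0$ that the depolarizing construction supplies, and to bring it into the closed form $\tfrac{1}{\alpha}I+c\,uu^\dagger$, from which equiangularity is immediate. First I would record what the hypotheses buy us: being a MIC forces $n=r$, so $\B{E}$ is a square matrix which informational completeness makes invertible; hence its block $\B{Z}^\dagger\in\mathbb{R}^{r\times(r-1)}$ has full column rank $r-1$. Unbiasedness says $\B{E}|I)=w=\tfrac{1}{r}u$, where $u^\dagger=(1,\dots,1)$, so $wu^\dagger=\tfrac{1}{r}uu^\dagger$. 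And $u^\dagger\B{Z}^\dagger=0$ (the columns of $\B{Z}^\dagger$ sum to zero), so the $r-1$ independent columns of $\B{Z}^\dagger$ span exactly the hyperplane $u^\perp$.

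The crux is to identify $\B{Z}^\dagger\B{X}$. Set $Q:=\alpha\,\B{Z}^\dagger\B{X}$. From $\alpha\B{X}\B{Z}^\dagger=I_{r-1}$ one gets $Q^2=\alpha^2\B{Z}^\dagger(\B{X}\B{Z}^\dagger)\B{X}=\alpha\,\B{Z}^\dagger\B{X}=Q$, so $Q$ is idempotent. Since $\B{Z}^\dagger$ is injective, $\ker Q=\ker\B{X}$; since $\alpha\B{X}\B{Z}^\dagger=I_{r-1}$ makes $\B{X}$ have rank $r-1$, this kernel is one-dimensional, and as $\B{X}w=0$ it must be $\mathrm{span}(w)=\mathrm{span}(u)$. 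A rank count then gives $\mathrm{range}(Q)=\mathrm{col}(\B{Z}^\dagger)=u^\perp$. Thus $Q$ is the projector with range $u^\perp$ and kernel $\mathrm{span}(u)$; because these are orthogonal complements, $Q$ is the \emph{orthogonal} projector onto $u^\perp$, i.e.\ $Q=I-\tfrac{1}{r}uu^\dagger$. I expect this step---promoting the generic oblique idempotent $\alpha\B{Z}^\dagger\B{X}$ to an orthogonal projector---to be the only nontrivial point, and it is exactly here that unbiasedness (through $\ker\B{X}=\mathrm{span}(u)$ together with $\mathrm{col}(\B{Z}^\dagger)=u^\perp$) is essential; the rest is bookkeeping with the Bloch blocks.

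Substituting back gives
\begin{align}
P(E|E)=\frac{1}{r}uu^\dagger+\frac{1}{\alpha}\Big(I-\frac{1}{r}uu^\dagger\Big)=\frac{1}{\alpha}I+\frac{1}{r}\Big(1-\frac{1}{\alpha}\Big)uu^\dagger ,
\end{align}
so every diagonal entry of $P(E|E)$ equals $\tfrac{1}{\alpha}+\tfrac{1}{r}(1-\tfrac{1}{\alpha})$ and every off-diagonal entry equals $\tfrac{1}{r}(1-\tfrac{1}{\alpha})$; in particular $P(E|E)$ is symmetric, its rows share a common norm, and all inner products of distinct rows coincide---the claimed equiangularity, and slightly more. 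As a check, in quantum theory with a SIC one has $r=d^2$ and $\alpha=d+1$, and the formula returns diagonal entries $\tfrac{1}{d}$ and off-diagonal entries $\tfrac{1}{d(d+1)}$, recovering the familiar reoccurrence probabilities.
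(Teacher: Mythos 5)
Your proof is correct and follows essentially the same route as the paper's: both arguments hinge on showing that $\alpha\B{Z}^\dagger\B{X}$ is the rank-$(r-1)$ projector complementary to $wu^\dagger$ (here $\tfrac{1}{r}uu^\dagger$), so that $P(E|E)=\tfrac{1}{\alpha}I+\bigl(1-\tfrac{1}{\alpha}\bigr)\tfrac{1}{r}uu^\dagger$. The only cosmetic difference is that you establish idempotency by computing $Q^2=Q$ and pin down the projector via its kernel and range, whereas the paper uses the shared nonzero spectrum of $\B{X}\B{Z}^\dagger$ and $\B{Z}^\dagger\B{X}$ together with a trace-orthogonality argument against $wu^\dagger$; your kernel/range identification is, if anything, a slightly tighter way to reach the same conclusion.
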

\begin{proof} Supposing the dimensionality of the GPT is $r$, then $\B{E}$, $\B{S}$, and $P(E|E)$ are all $r\times r$ matrices of full rank. Since by construction, $\B{X}\B{Z}^\dagger = \frac{1}{a}I_{r-1}$, and since $\B{X}\B{Z}^\dagger$ and $\B{Z}^\dagger\B{X}$ have the same non-zero spectrum, we know that $\B{Z}^\dagger\B{X}$ has $r-1$ eigenvalues equal to $\frac{1}{\alpha}$ and a single eigenvalue equal to 0. Thus $\alpha \B{Z}^\dagger \B{X}$ is a projector. Moreover, since $u^\dagger\B{Z}^\dagger=0$, we have that
\begin{align}
\tr\left( \big[ wu^\dagger \big] \big[\alpha \B{Z}^\dagger \B{X}\big]\right)=0:	
\end{align}
in other words, $\alpha \B{Z}^\dagger \B{X}$ and $wu^\dagger$, which is itself a projector $(wu^\dagger wu^\dagger = wu^\dagger$, since the weights sum to 1), must project onto orthogonal subspaces. We thus conclude that
\begin{align}
	\alpha \B{Z}^\dagger \B{X} = I - wu^\dagger.
\end{align}
With this in hand, we have
\begin{align}
P(E|E) &= wu^\dagger + \frac{1}{\alpha}(I - wu^\dagger)\\
	&= \frac{1}{\alpha}I + \left(1-\frac{1}{\alpha}\right)wu^\dagger.
\end{align}
Supposing that the initial measurement is unbiased, we have $w_i=\frac{1}{r}$, and thus 
\begin{align}
P(E|E) &= 	\frac{1}{\alpha}I + \left(1-\frac{1}{\alpha}\right)\frac{1}{r}uu^\dagger.
\end{align}
This is an equiangular conditional probability matrix, with one unique value repeated along the diagonal, and another unique value on the off-diagonal. 
\end{proof}

\section{Morphophoricity}

The keystone of the construction in the previous section was the demand that
\begin{align}
\alpha \B{X}\B{Z}^\dagger = I_{r-1}	
\end{align}
so that $\B{X} = \frac{1}{\alpha}(\B{Z}^\dagger)^{L}$ where $\alpha$ is chosen so that the resulting states are non-negative on all effects.  Suppose, however, that we demand that the reference states take the particular form
\begin{align}
\B{S} = 	\left(\begin{array}{ccc}  
1 & \dots & 1 \\
\hline
& \\
& c\B{Z}W^{-1} \\
& &\\
\end{array}\right),
\end{align}
where $W^{-1}$ is the diagonal matrix with the reciprocals of the weights $\frac{1}{w_i}$ along the diagonal. If the constant $c$  were equal to 1, then this would amount to taking $|S_i) = \frac{1}{w_i}(E_i|^\dagger$, that is, to taking the reference states proportional to effects. This is always possible in a self-dual GPT: otherwise, employing the same trick as before, one may choose the constant $c$ in order to depolarize the prospective states until they are non-negative on all effects. Either way, we'll call such a reference device a $\emph{parallel update}$ reference device. If such a device is to be depolarizing, we must have
\begin{align}
\alpha c\B{Z}W^{-1}\B{Z}^\dagger = I_{r-1}	,
\end{align}
or $\B{X}=c\B{Z}W^{-1} = \frac{1}{\alpha}(\B{Z}^\dagger)^L$: since $c$ is fixed by the demand that $c\B{Z}W^{-1}$ furnish valid states, this fixes the value of $\alpha$. In other words, we require that the columns of $\B{Z}^\dagger$ (which notice, cut against the grain of the effects) must be orthonormal with respect to the metric provided by $\mathcal{M}=\alpha cW^{-1}$. 

If this is possible, we have the following interesting consequence. Recalling that $P(E)=\B{E}|\rho)=w+\B{Z}^\dagger \rho_0$, we have
\begin{align}
\big{\lVert}P(E)_\rho - P(E)_\sigma \big{\lVert}^2_{\mathcal{M}} &= \big{\lVert}w+ \B{Z}^\dagger \rho_0 - w-\B{Z}^\dagger \sigma_0 \big{\lVert}^2_{\mathcal{M}}\\
&=\big{\lVert}\B{Z}^\dagger (\rho_0 - \sigma_0)\big{\lVert}^2_\mathcal{M}\\
&= (\rho_0 - \sigma_0)^\dagger\B{Z} \mathcal{M} \B{Z}^\dagger(\rho_0 - \sigma_0)\\
&= (\rho_0 - \sigma_0)^\dagger\alpha c\B{Z}W^{-1}\B{Z}^\dagger(\rho_0 - \sigma_0)\\
&= \big{\lVert}\rho_0-\sigma_0 \big{\lVert}^2.
\end{align}
In other words, up to the diagonal metric $\mathcal{M}$, the measurement map from the traceless subspace into the probability simplex preserves distances. We'll call such a measurement \emph{weight-morphophoric}. For an unbiased measurement, we have $\B{Z}\B{Z}^\dagger = \frac{1}{\alpha cn}I$ so that
\begin{align}
	\big{\lVert}P(E)_\rho - P(E)_\sigma \big{\lVert}^2&= \frac{1}{\alpha cn} \big{\lVert}\rho_0-\sigma_0 \big{\lVert}^2:
\end{align}
the measurement map is then a \emph{similarity} with respect to the Euclidean metric on both the probability simplex and the traceless subspace. Such a measurement is properly \emph{morphophoric}.

\begin{remark}
``Morphophoric'' in Greek means \emph{form-bearing}: indeed, a morphophoric measurement embeds a state space into the probability simplex in a shape preserving way.	The terminology was introduced in \cite{morphophoric}. A rigorous mathematical treatment of the subject can be found in \cite{morpho}. The essence of morphophoricity lies in the demand that $\B{Z}\B{Z}^\dagger \propto I$: the measurement furnishes a so-called \emph{tight frame} for the traceless subspace. Morphophoric measurements in general yield Urgleichung-like expressions which are somewhat more sophisticated\footnote{Indeed, which are not necessarily Protourgleichung in our sense, as the post-measurement states may depend on the input state.}; we confine our interest here to the very simplest deformations of the law of total probability which correspond to quasi-probability representations of the form $W(E)=\alpha P(E) + (1-\alpha)w$. This motivates our introduction of the notion of weight-morphophoricity, that is, the demand that $\B{Z}W^{-1}\B{Z}^\dagger \propto I$ in order to handle biased measurements on equal footing with unbiased.
\end{remark}

\begin{lemma}
Weight-morphophoric measurements exist in any GPT for any number of outcomes $n$ and arbitrary choice of biases.
\end{lemma}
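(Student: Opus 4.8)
The plan is to work in Bloch form and build the block matrix $\B{E}=\left(\begin{array}{c|c} w & \B{Z}^\dagger \end{array}\right)$ directly, by exhibiting a suitable $\B{Z}^\dagger\in\mathbb{R}^{n\times(r-1)}$. Four things are needed: the columns of $\B{Z}^\dagger$ must sum to $0$ (so that $u^\dagger\B{E}=(I|$, i.e.\ the $(E_i|$ form a measurement, using $\sum_i w_i=1$ for the traceful component); $\B{Z}^\dagger$ must have full column rank $r-1$ (informational completeness); $\B{Z}W^{-1}\B{Z}^\dagger$ must be proportional to $I_{r-1}$ (weight-morphophoricity); and each row $(E_i|=(w_i\mid z_i^\dagger)$ must be a legitimate effect, i.e.\ $(E_i|\rho)=w_i+z_i^\dagger\rho_0\ge0$ for every state $\rho$. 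I assume throughout that $w_i>0$ (so that $W^{-1}$ is defined) and that $n\ge r$ (necessary in any case for informational completeness: the zero-sum columns of $\B{Z}^\dagger$ lie in the $(n-1)$-dimensional subspace $u^\perp\subseteq\mathbb{R}^n$, and we want $r-1$ of them linearly independent).

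The construction absorbs the weights into the frame. Let $\sqrt{w}\in\mathbb{R}^n$ be the unit vector with entries $\sqrt{w_i}$, and pick any $(r-1)$-dimensional subspace $V$ of its orthogonal complement $\{\sqrt{w}\}^\perp$, which exists precisely because $n-1\ge r-1$. Let $F\in\mathbb{R}^{(r-1)\times n}$ have rows forming an orthonormal basis of $V$, so that $FF^\dagger=I_{r-1}$ and $F\sqrt{w}=0$. Now set $\B{Z}^\dagger:=\lambda\,W^{1/2}F^\dagger$ for a scalar $\lambda>0$ to be fixed below; the $i$-th row is then $z_i^\dagger=\lambda\sqrt{w_i}\,g_i^\dagger$, where $g_i$ is the $i$-th column of $F$. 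The three algebraic conditions are immediate: the columns of $\B{Z}^\dagger$ sum to $\lambda F\sqrt{w}=0$; $\B{Z}^\dagger$ has full column rank since $W^{1/2}$ is invertible and $F^\dagger$ has rank $r-1$; and $\B{Z}W^{-1}\B{Z}^\dagger=\lambda^2 FW^{1/2}W^{-1}W^{1/2}F^\dagger=\lambda^2FF^\dagger=\lambda^2 I_{r-1}$, so the measurement is weight-morphophoric with proportionality constant $\lambda^2$. The point of writing $z_i=\lambda\sqrt{w_i}\,g_i$ is exactly that the weighted tight-frame condition $\sum_i\frac{1}{w_i}z_iz_i^\dagger\propto I$ collapses to the ordinary tight-frame identity $\sum_i g_ig_i^\dagger=FF^\dagger=I_{r-1}$, while the requirement $\B{Z}^\dagger$-columns-sum-to-zero becomes orthogonality of each frame vector to $\sqrt{w}$.

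It remains to fix $\lambda$ so that the rows are valid effects. The GPT being finite-dimensional, the state space is compact, so $R:=\sup_{\rho\in\Sigma}\|\rho_0\|<\infty$ (if $R=0$ the theory is a single point, $r=1$, and the claim is vacuous). For any state, $|z_i^\dagger\rho_0|\le\|z_i\|\,R=\lambda\sqrt{w_i}\,\|g_i\|\,R$, so choosing $\lambda\le\min_i\sqrt{w_i}/(\|g_i\|R)$ makes $(E_i|\rho)=w_i+z_i^\dagger\rho_0\ge0$; the complementary bound $(E_i|\rho)\le1$ holds automatically because the $(E_i|$ are non-negative and $\sum_i(E_i|=(I|$. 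Hence $\{(E_i|\}_{i=1}^n$ is a valid informationally complete measurement with bias vector $w$ that is weight-morphophoric, proving the lemma. The only genuinely non-routine step is the choice of parametrization $\B{Z}^\dagger=\lambda W^{1/2}F^\dagger$ in the second paragraph; once the weights are folded in this way, everything else reduces to elementary properties of an orthonormal family or to a depolarization-type scaling argument of the kind already used in the proof that depolarizing reference states can always be chosen.
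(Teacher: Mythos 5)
Your construction is correct and follows essentially the same route as the paper: fold the weights into an orthonormal family via $W^{1/2}$ so that the weighted tight-frame condition $\B{Z}W^{-1}\B{Z}^\dagger\propto I_{r-1}$ collapses to $FF^\dagger=I_{r-1}$, then shrink the traceless parts by a scalar until every row is a valid effect. The one substantive difference is in your favor: you require the frame vectors to be orthogonal to $\sqrt{w}$ rather than to $u$, which is exactly what is needed for $u^\dagger\B{Z}^\dagger=0$ after the $W^{1/2}$ rescaling (the paper's orthogonal-to-$u$ choice secures this only in the unbiased case), whereas the paper additionally rescales to obtain valid parallel-update states $c\B{Z}W^{-1}$, something the lemma as stated does not demand.
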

\begin{proof}
Suppose the dimensionality of the GPT is $r$. Pick any set of $r-1$ orthogonal $n$-dimensional vectors each of which is orthogonal to $u = (1,\dots,1)^\dagger$. This can always be done. Arrange them into the columns of a matrix $\B{R^\dagger}$ so that $\B{R}\B{R}^\dagger = I_{r-1}$. Pick an arbitrary set of weights such that $\sum_i w_i = 1$. We can then easily construct $(\B{R}^\dagger)^\prime$ so that 
\begin{align}
	\B{R}\B{R}^\dagger = \left[\B{R}W^{\frac{1}{2}}\right]W^{-1} \left[W^{\frac{1}{2}}\B{R}^\dagger\right]=\B{R}^\prime W^{-1}(\B{R}^\dagger)^\prime= I_{r-1}.
\end{align}
There is no guarantee that $(\B{R}^\dagger)^\prime$ will furnish valid effects and $\B{R}^\prime W^{-1}$ will furnish valid states. So depolarize them by picking parameters $\alpha$ and $c$ so that $\sqrt{\frac{c}{\alpha}}\B{R}^\prime W^{-1}$ delivers states non-negative on all effects and $\frac{1}{\sqrt{\alpha c}}(\B{R}^\dagger)^\prime$ delivers effects non-negative on all states\footnote{And indeed, so that $\alpha c>0$.}. Then
\begin{align}
	\alpha c\left[\frac{1}{\sqrt{\alpha c}}\B{R}^\prime\right] W^{-1}\left[ \frac{1}{\sqrt{\alpha c}}(\B{R}^\dagger)^\prime\right]= \alpha c \B{Z}W^{-1}\B{Z}^\dagger=  I_{r-1}.
\end{align}
Thus we've constructed a weight-morphophoric measurement with
\begin{align}
\B{E}=	\left(\begin{array}{c|ccc}  
& & & \\
w & & \B{Z}^\dagger \\
& & &\\
\end{array}\right) && \B{S}=\left(\begin{array}{ccc}  
1 & \dots & 1 \\
\hline
& \\
& c\B{Z}W^{-1} \\
& &\\
\end{array}\right).
\end{align}
The fact that the weights sum to 1 and the original $n$-dimensional vectors were orthogonal to $u$ guarantees that $\sum_i (E_i| = (I|$.  Informational completeness is inherited from their original vectors' orthonormality.
\end{proof}

Suppose we have a weight-morphophoric measurement, i.e.\ such that the states proportional to the reference effects  (modulo $c$) furnish a depolarizing channel. Regardless, we are not limited to this particular choice of reference states: any $\B{S}$ built from a left inverse of $\B{Z}^\dagger$ will suffice. There is a special case, however.

\begin{theorem}
For a weight-morphophoric MIC reference measurement, the choice of reference states which furnish a depolarizing channel is fixed up to sign.
\end{theorem}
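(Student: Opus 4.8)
The plan is to exploit the rigidity of the MIC case, where the GPT dimension $r$ equals the number of outcomes $n$: in Bloch form the reference measurement is then an invertible $n\times n$ matrix (first column the bias vector $w$, remaining columns $\B{Z}^\dagger$), so the Born identity leaves essentially no freedom. I would start from the description already obtained in the proof of Theorem~3, specialized to $r=n$: a candidate reference‑state matrix $\B{S}$ with top row $u^\dagger=(1,\dots,1)$ and lower block $\B{X}\in\mathbb{R}^{(r-1)\times n}$ furnishes a depolarizing channel of parameter $1/\alpha$ exactly when $\B{X}w=0$ and $\alpha\B{X}\B{Z}^\dagger=I_{r-1}$. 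The whole statement then reduces to showing that these two conditions, together with weight‑morphophoricity, determine $\B{X}$ up to a single overall sign.

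First I would pin down the left inverse. The condition $\alpha\B{X}\B{Z}^\dagger=I_{r-1}$ says $\alpha\B{X}$ is a left inverse of $\B{Z}^\dagger$. Since $\B{Z}^\dagger$ is $n\times(r-1)$ with full column rank $r-1=n-1$, its left inverses form an affine family whose members differ by matrices whose rows lie in the left kernel of $\B{Z}^\dagger$; this kernel is one‑dimensional and, because $u^\dagger\B{Z}^\dagger=0$, is spanned by $u^\dagger$. Imposing the remaining condition $\B{X}w=0$, i.e.\ $(\alpha\B{X})w=0$, and using $u^\dagger w=1$, kills exactly that one degree of freedom. Hence there is a \emph{unique} left inverse $Y^\star$ of $\B{Z}^\dagger$ annihilating $w$, and necessarily $\B{X}=\tfrac1\alpha Y^\star$.

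Next I would identify $Y^\star$ from the hypothesis. Weight‑morphophoricity means $\B{Z}W^{-1}\B{Z}^\dagger=\lambda I_{r-1}$ for some nonzero constant $\lambda$, so $\tfrac1\lambda\B{Z}W^{-1}$ is a left inverse of $\B{Z}^\dagger$; and it annihilates $w$ because $\B{Z}W^{-1}w=\B{Z}u=0$, the traceless parts of the reference effects summing to zero (as $\sum_i(E_i|=(I|$ is purely traceful). By the uniqueness just established, $Y^\star=\tfrac1\lambda\B{Z}W^{-1}$, whence $\B{X}=\tfrac1{\alpha\lambda}\B{Z}W^{-1}$: any depolarizing reference device built on a weight‑morphophoric MIC must have its states in parallel‑update form, each $|S_i)$ lying on the line through the maximally mixed state $|M)$ and the rescaled effect $\tfrac1{w_i}(E_i|^\dagger$. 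The only datum not yet determined is the overall scale $\tfrac1{\alpha\lambda}$, i.e.\ the depolarizing parameter, and this is where I invoke the parallel‑update normalization set up just before the theorem: the constant $c$, hence $\alpha$, is fixed by the requirement that the states be valid, and the sole residual ambiguity is whether one depolarizes toward or away from $|M)$ --- the $\pm$ already flagged in the proof of Theorem~3. That is the claimed ``up to sign.''

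The routine parts are the kernel dimension count, the pinning of the last parameter by $\B{X}w=0$, and the verification that $\tfrac1\lambda\B{Z}W^{-1}$ is the distinguished left inverse. The one genuinely delicate point is the final collapse of the residual scale freedom to a single sign: without fixing a normalization the depolarizing parameter would still vary over an interval, so I would be careful to phrase the theorem relative to the parallel‑update convention and to contrast it with the overcomplete weight‑morphophoric case, where the left kernel of $\B{Z}^\dagger$ has dimension $n-(r-1)>1$, uniqueness fails, and the family of admissible depolarizing reference states is correspondingly large.
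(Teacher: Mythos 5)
Your proof is correct and follows essentially the same route as the paper's: both arguments come down to the fact that for a MIC the left kernel of $\B{Z}^\dagger$ is one-dimensional and spanned by $u^\dagger$, so any left inverse of $\B{Z}^\dagger$ can differ from $c\,\B{Z}W^{-1}$ only by rows proportional to $u^\dagger$. The one substantive difference is that the paper stops at the dichotomy ``either the rows of the difference lie in that one-dimensional subspace or the two choices coincide'' and does not explicitly dispose of the first alternative, whereas you close it cleanly by invoking the other half of the depolarizing condition, $\B{X}w=0$, together with $\B{Z}W^{-1}w=\B{Z}u=0$ and $u^\dagger w=1$; this is a genuine tightening, since an $\B{X}$ of the form $c\,\B{Z}W^{-1}+vu^\dagger$ with $v\neq 0$ does satisfy $\beta\B{X}\B{Z}^\dagger=I_{r-1}$ but fails to give a depolarizing channel. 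Your closing caveat is also the right reading of the statement: as in the paper's own ``essentially fixed (up to $c$)'' and the subsequent corollary's ``modulo $c$,'' validity constrains the depolarizing parameter only to an interval, so what is genuinely rigid is the parallel-update direction of each reference state, with the sign as the sole discrete ambiguity; your earlier phrase that validity ``fixes'' $c$ overstates this, but you correct it yourself in the final paragraph.
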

\begin{proof}
Suppose there were some $\B{X}$ such that both
\begin{align}
\alpha c \B{Z}W^{-1}\B{Z}^\dagger = I_{r-1} && \beta \B{X}\B{Z}^\dagger = I_{r-1}	
\end{align}
were true: here  $\B{X}, \B{Z} \in \mathbb{R}^{(r-1)\times r}_{r-1}$. But this means
\begin{align}
(\alpha c \B{Z}W^{-1}	-\beta\B{X})\B{Z}^\dagger = 0,
\end{align}
which implies that the $r-1$ rows of $(\alpha c \B{Z}W^{-1}	-\beta\B{X})$ are each orthogonal to the $r-1$ columns of $\B{Z}^\dagger$. A vector in an $r$-dimensional space can only be orthogonal to at most $r-1$ vectors. The columns of $\B{Z}^\dagger$ are linearly independent, and so each of the $r-1$ rows of $(\alpha c \B{Z}W^{-1}	-\beta\B{X})$ is orthogonal to an $r-1$ dimensional subspace. So either they all live in the same 1-dimensional subspace, or else $\alpha c \B{Z}W^{-1}	=\beta\B{X}$. Thus we conclude that for a weight-morphophoric MIC, the choice of depolarizing reference states is essentially fixed (up to $c$): it may, however, be the case that we can take $\B{X} =-c\B{Z}W^{-1}$ and $\beta=-\alpha$, hence the reference states are fixed up to sign.
\end{proof}

\begin{remark}
 Consider the following SIC-POVM in $d=2$. The POVM elements can be written in terms of Pauli matrices
\begin{align}
E_0 = \frac{1}{4}\left(I + \frac{1}{\sqrt{3}}(\sigma_x + \sigma_y + \sigma_z\right) && E_2 &= \frac{1}{4}\left(I + \frac{1}{\sqrt{3}}(-\sigma_x - \sigma_y + \sigma_z\right)	\\
E_1 = \frac{1}{4}\left(I + \frac{1}{\sqrt{3}}(\sigma_x - \sigma_y - \sigma_z\right) && E_3&= \frac{1}{4}\left(I + \frac{1}{\sqrt{3}}(-\sigma_x + \sigma_y - \sigma_z\right),
\end{align}
or more compactly in the Bloch representation
\begin{align}
\B{E} = \left(\begin{array}{c|ccc} 
 \frac{1}{4}u & \B{Z}^\dagger
 \end{array}\right)	&& \B{Z}^\dagger = \frac{1}{4\sqrt{3}}\begin{pmatrix} 1 & 1 & 1 \\
 1 & -1 & -1\\
-1 & -1 & 1 \\
-1 & 1 & -1  \end{pmatrix}.
\end{align}
We could choose states proportional to effects $
\B{S} = \begin{pmatrix} u^\dagger \\ \hline 4\B{Z}\end{pmatrix}$, in which case $\B{Z}W^{-1}\B{Z}^\dagger = 4\B{Z}\B{Z}^\dagger= \frac{1}{3}I$ so that $\alpha=3$ (and $c=1$). Then $\B{SE}$ corresponds to a depolarizing channel with parameter $\frac{1}{3}$: 
\begin{align}
|\rho) \mapsto \frac{1}{3}|\rho) + \frac{2}{3}|M)
\end{align}
so that $P(A) = P(A|E)\big[3P(E)-\frac{1}{2}u\big]$. Alternatively, however, we could choose to prepare the states \emph{antipodal} to the aforementioned, which also form a SIC, that is, we could take $
\B{S}^\prime = \begin{pmatrix} u^\dagger \\ \hline -4\B{Z}\end{pmatrix}$, in which case  $\B{X}=-\B{Z}W^{-1}=-4\B{Z}$ and $\beta=-\frac{1}{3}$. Then 
\begin{align}
|\rho) \mapsto -\frac{1}{3}|\rho) + \frac{4}{3}|M)
\end{align}
with $P(A)=P(A|E) \big[-3P(E)+u\big]$. In quantum theory,  complete positivity for a depolarizing channel demands that $-\frac{1}{d^2-1} \leq \frac{1}{\alpha}\leq 1$. Indeed, when $d=2$, the lower bound becomes $-\frac{1}{3}$, which this latter channel saturates. Thus for both choices of post-measurement states, we have a valid depolarizing channel: however it is worth remarking that the latter choice also leads to an inversion of the state space. For an arbitrary GPT, however, it may or may not be the case that the channel with an inverted depolarization parameter is a valid channel.
\end{remark}

\begin{corollary}
Suppose we have an (unbiased) morphophoric MIC reference measurement: its depolarizing reference states are fixed up to sign to be proportional to effects (modulo $c$). By Theorem \ref{equiangular}, its self-conditional probability matrix must be equiangular. Thus all morphophoric MIC measurements are SICs, that is, \emph{symmetric informationally complete} measurements which form a regular simplex in the effect space. Note, however, that such measurements need not be extremal, that is, lying on the boundary of the effect space. It is easy to see that non-extremal SICs exist in any GPT: one may simply embed a sufficiently small $(r-1)$-simplex (with $r$ vertices) in the effect space such that the effects are non-negative on all states.
\end{corollary}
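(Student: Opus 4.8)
The plan is to obtain the characterization directly from the two preceding theorems and then to settle the non-extremality remark by an explicit embedding. First I would use the theorem that, for a weight-morphophoric MIC, the depolarizing reference states are fixed up to sign and are (modulo $c$) proportional to the effects, so that we may take $\B{X} = c\B{Z}W^{-1}$; since the measurement is unbiased, $w_i = \tfrac{1}{r}$, hence $W^{-1} = rI$ and $\B{X} = cr\B{Z}$. Feeding this device into Theorem~\ref{equiangular} gives
\begin{align}
P(E|E) = \frac{1}{\alpha}I + \left(1 - \frac{1}{\alpha}\right)\frac{1}{r}uu^\dagger ,
\end{align}
and subtracting the traceful block $wu^\dagger = \tfrac{1}{r}uu^\dagger$ from the decomposition $P(E|E) = wu^\dagger + \B{Z}^\dagger\B{X}$ yields $cr\,\B{Z}^\dagger\B{Z} = \tfrac{1}{\alpha}\bigl(I - \tfrac{1}{r}uu^\dagger\bigr)$. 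Thus the Gram matrix $\B{Z}^\dagger\B{Z}$ of the traceless parts of the effects equals $\tfrac{1}{\alpha c r}\bigl(I - \tfrac{1}{r}uu^\dagger\bigr)$: all diagonal entries are equal, all off-diagonal entries are equal and negative.

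The second step is to read off that this two-valued Gram matrix, together with the centering relation $u^\dagger\B{Z}^\dagger = 0$ (valid because $\sum_i (E_i| = (I|$ is purely traceful), characterizes a regular simplex. The $r$ traceless effect vectors all have equal norm, equal pairwise inner products, and sum to zero; and since $I - \tfrac{1}{r}uu^\dagger$ has rank $r-1$ they span the whole traceless subspace. These are precisely the defining properties of the vertices of a regular $(r-1)$-simplex, i.e.\ of a regular simplex of effects centered at $\tfrac{1}{r}(I|$ --- a SIC. Hence every unbiased morphophoric MIC is a SIC, and conversely any regular simplex of effects has $\B{Z}\B{Z}^\dagger \propto I_{r-1}$, so it is (weight-)morphophoric by the morphophoricity lemma.

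Finally, for the non-extremality claim I would exhibit a non-extremal SIC in an arbitrary $r$-dimensional GPT. Since $\tfrac{1}{r}(I|S) = \tfrac{1}{r} \in (0,1)$ for every state $S$, the effect $\tfrac{1}{r}(I|$ lies in the interior of $\mathcal{E}$; take any regular $(r-1)$-simplex of vectors in the traceless subspace centered at the origin, scale it by a sufficiently small $\delta>0$, and translate its centroid to $\tfrac{1}{r}(I|$. For $\delta$ small the resulting $r$ effects lie in the interior of $\mathcal{E}$, sum to $r\cdot\tfrac{1}{r}(I| = (I|$, and span $V^*$ (their differences span the traceless subspace and their sum is traceful), so they form a valid informationally complete measurement --- a regular simplex touching nowhere the boundary of the effect space. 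I do not expect a serious obstacle; the only points needing care are the bookkeeping that ``$P(E|E)$ equiangular plus centered'' genuinely pins down a regular simplex (rather than merely an equiangular configuration), and the routine verification that positivity and informational completeness survive the small rescaling in the last construction.
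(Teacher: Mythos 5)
Your proposal is correct and follows essentially the same route as the paper: combine the fixed-up-to-sign theorem (states proportional to effects, modulo $c$) with Theorem \ref{equiangular} to get an equiangular $P(E|E)$, read off that the effects form a regular simplex, and establish non-extremal SICs by embedding a small $(r-1)$-simplex centered at $\tfrac{1}{r}(I|$. You simply make explicit two steps the paper leaves implicit --- the Gram-matrix computation $\B{Z}^\dagger\B{Z} \propto I - \tfrac{1}{r}uu^\dagger$ identifying the regular simplex, and the openness argument for the scaled simplex staying inside the effect space --- both of which are fine.
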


We can now relate this construction to quantum theory.

\begin{theorem}
In quantum theory over $\mathbb{C}$, pure weight-morphophoric reference devices correspond to weighted complex projective 2-designs.
\end{theorem}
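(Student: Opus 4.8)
The plan is to collapse both sides of the claimed correspondence onto a single linear-algebraic condition on the Bloch vectors of the reference pure states. First I would set up notation: in quantum theory over $\mathbb{C}$ the GPT dimension is $r = d^2$, a \emph{pure} reference measurement consists of rank-$1$ effects $E_i = d\, w_i\, |\psi_i\rangle\langle\psi_i|$ with $\sum_i w_i = 1$ (so that $\sum_i E_i = I$ and the $w_i$ are exactly the Bloch weights), and by the earlier theorem on weight-morphophoric MICs the post-measurement states are fixed up to sign to be proportional to these same pure states. Thus the whole device is encoded by the weighted ensemble $\{(w_i, |\psi_i\rangle)\}$, and ``corresponds to a weighted complex projective $2$-design'' is to be read as: this ensemble is one, and conversely every weighted $2$-design arises this way. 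Writing $\rho_i = |\psi_i\rangle\langle\psi_i|$ and letting $v_i$ be the traceless part of the Bloch vector of $\rho_i$ in an orthonormal Gell-Mann basis, the $i$-th row of $\B{Z}^\dagger$ (the traceless part of $(E_i|$) is a fixed scalar multiple of $w_i v_i$, so $\B{Z}W^{-1}\B{Z}^\dagger = \sum_i \tfrac{1}{w_i} z_i z_i^\dagger$ is proportional to $\sum_i w_i\, v_i v_i^\dagger$. Hence \emph{pure weight-morphophoricity is equivalent to} $\sum_i w_i\, v_i v_i^\dagger \propto I_{d^2-1}$; taking the trace and using that every pure state sits at the fixed Bloch radius $\|v_i\|^2 = 1 - \tfrac{1}{d}$ pins the constant to $\tfrac{1}{d(d+1)}$.

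Next I would invoke the standard characterization of a weighted complex projective $2$-design: $\{(w_i,|\psi_i\rangle)\}$ is one iff $\sum_i w_i\, \tr(X\rho_i)^2 = \tfrac{(\tr X)^2 + \tr(X^2)}{d(d+1)}$ for every Hermitian $X$ (equivalently $\sum_i w_i\, \rho_i^{\otimes 2} = \tfrac{2}{d(d+1)}\Pi_{\mathrm{sym}}$). The bridge is the orthogonal splitting $X = \tfrac{\tr X}{d} I + X_0$ with $X_0$ traceless, giving $\tr(X\rho_i) = \tfrac{\tr X}{d} + \langle X_0, v_i\rangle$. Squaring, summing against $w_i$, and using $\sum_i w_i v_i = 0$ — which holds precisely because $\sum_i E_i = I$ forces $\sum_i w_i \rho_i = \tfrac{1}{d} I$, the origin of Bloch space — kills the cross term, leaving $\sum_i w_i\, \tr(X\rho_i)^2 = \big(\tfrac{\tr X}{d}\big)^2 + X_0^\dagger\big(\sum_i w_i v_i v_i^\dagger\big) X_0$. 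A one-line check shows the traceful piece already matches the $\tr X$-dependent part of $\tfrac{(\tr X)^2 + \tr X^2}{d(d+1)}$, so the $2$-design identity holds for all $X$ iff $X_0^\dagger\big(\sum_i w_i v_i v_i^\dagger\big) X_0 = \tfrac{\|X_0\|^2}{d(d+1)}$ for all traceless Hermitian $X_0$, i.e.\ iff $\sum_i w_i\, v_i v_i^\dagger = \tfrac{1}{d(d+1)} I_{d^2-1}$ — exactly the condition from the first step. This proves the equivalence in both directions, the converse being immediate since any weighted $2$-design furnishes rank-$1$ effects $E_i = d w_i \rho_i$ satisfying the Bloch-moment identity.

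I expect the main obstacle to be bookkeeping rather than conceptual: carefully tracking the convention-dependent constants relating $(E_i|$, the Bloch vector of $\rho_i$, and the Hilbert--Schmidt inner product, and checking that the traceful/traceless split of the second-moment formula lines up term by term. The conceptual keystone, and the step I would present most deliberately, is that the POVM constraint $\sum_i E_i = I$ is exactly what forces the weighted barycenter $\sum_i w_i v_i$ to vanish, so weight-morphophoricity constrains only the second Bloch moment, which is precisely the data a $2$-design fixes. I would close with a remark that the unbiased specialization recovers the known fact that morphophoric rank-$1$ POVMs are (unweighted) complex projective $2$-designs (so SICs and MUB-type POVMs give Protourgleichungen), and that working over $\mathbb{C}$ is essential here: it is what makes $r = d^2$ and produces the $d(d+1)$ denominator, whereas the real case would instead land on real projective $2$-designs.
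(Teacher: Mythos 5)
Your proposal is correct, but it takes a genuinely different route from the paper. The paper works with vectorized operators and the channel: it writes the depolarizing condition $\sum_i |S_i)(E_i| = \tfrac{1}{\alpha} I_{d^2} + (1-\tfrac{1}{\alpha})\tfrac{1}{d}|I_d)(I_d|$, pins $\alpha = d+1$ by taking traces of both sides under the parallel-update and purity assumptions, and then converts the resulting identity by a partial transpose into $\sum_i w_i S_i \otimes S_i = \tfrac{2}{d(d+1)}\Pi_{\text{sym}^2}$, matched against Schur's lemma. You instead collapse both weight-morphophoricity and the $2$-design property onto a single statement about the weighted second Bloch moment, $\sum_i w_i v_i v_i^\dagger = \tfrac{1}{d(d+1)} I_{d^2-1}$, using the Hermitian-test-operator characterization of weighted $2$-designs together with the vanishing first moment $\sum_i w_i v_i = 0$ forced by $\sum_i E_i = I$; your constant-pinning via $\|v_i\|^2 = 1-\tfrac{1}{d}$ and the traceful/traceless bookkeeping both check out. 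What the paper's route buys is the explicit value $\alpha = d+1$ (which feeds directly into the subsequent corollary recovering the Urgleichung) and the structural link to $\Pi_{\text{sym}^2}$; what your route buys is a more elementary, symmetric treatment of both directions of the equivalence entirely in the traceless subspace, making transparent that the POVM constraint fixes the first moment and morphophoricity fixes only the second. Two small points to tighten: (i) the quoted second-moment identity for Hermitian $X$ is equivalent to the $2$-design condition only after a polarization plus SWAP-symmetry argument, which deserves a line; and (ii) in the converse you should note that $\sum_i E_i = I$ for $E_i = d w_i \rho_i$ follows from the $2$-design condition by a partial trace, and that invoking the MIC rigidity theorem to fix the post-measurement states is unnecessary (and does not apply in the overcomplete case) since the parallel-update choice of states is already part of the definition of weight-morphophoricity.
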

\begin{proof}
For a depolarizing reference device, we have
\begin{align}
\mathcal{C} = \B{SE}=\sum_{i=1}^n |S_i)(E_i|=	\frac{1}{\alpha}I+\left(1-\frac{1}{\alpha}\right) \frac{|I)(I|}{(I|I)}.
\end{align}
Given that we are working with quantum theory over $\mathbb{C}$, we can interpret this expression  in terms of vectorized operators, which will be elements of $\mathbb{C}^{d^2}$ \cite{caves_etal}. Let $|A) = (A \otimes I )\sum_{i=1}^d |i,i\rangle$ be the (row) vectorization of $A$. It consists of the rows of $A$ laid end to end to form a column vector. Then
\begin{align}
(A|B) &= \Bigg[ \sum_{i=1}^d \langle i,i| (A^{\dagger} \otimes I) \Bigg]		 \Bigg[ (B\otimes I) \sum_{j=1}^d |j,j\rangle \Bigg] \\
&= \sum_{i,j} \langle i|A^{\dagger} B|j \rangle \otimes \langle i|j \rangle= \sum_i \langle i | A^{\dagger} B |i \rangle= \tr(A^{\dagger}B).
\end{align}
 In other words, $(A|B)$ reproduces the Hilbert-Schmidt inner product $\tr(A^\dagger B)$ on operators. Note that in this case $(E|$ really does mean the conjugate transpose of $|E)$. 
 
Noting that $(I|$ is the vectorized identity operator while $\frac{1}{d}|I)$ corresponds to the maximally mixed state, so that $(I|I)=d$, we have in particular
\begin{align}
\label{quantumC}
\sum_{i=1}^n |S_i)(E_i|= \frac{1}{\alpha}I_{d^2}	+ \left(1-\frac{1}{\alpha}\right)\frac{1}{d}|I_d)(I_d|.
\end{align}
Taking the trace of the right-hand side of Eq. \!(\ref{quantumC}), we find
\begin{align}
\sum_{i=1}^n \tr(E_i S_i) &= \frac{1}{\alpha}(d^2) +\left(1-\frac{1}{\alpha}\right)\frac{1}{d}(d)=\frac{1}{\alpha}(d^2-1)+1.
\end{align}
Quantum theory over $\mathbb{C}$ is a self-dual theory, and so we can always take reference states directly proportional to effects: thus let $E_i = \tr(E_i)S_i$---this along with the depolarizing assumption amounts to weight-morphophoricity. Moreover, let us assume that our states are pure so that $\tr(S_i^2)=1$. Then taking the trace of the left hand side of Eq. \!(\ref{quantumC}), we find 
\begin{align}
	\sum_{i=1}^n \tr(E_i S_i) &= \sum_{i=1}^n \tr(E_i)\tr(S_iS_i) = \sum_{i=1}^n \tr(E_i)=\tr\left(\sum_{i=1}^n E_i\right)=\tr(I)=d.
\end{align}
Equating these two expressions fixes the value of $\alpha$:
\begin{align}
	\frac{1}{\alpha}(d^2-1)+1 &= d \Longrightarrow
	\alpha = d+1.
\end{align}
Thus our channel operator becomes
\begin{align}
\mathcal{C}=\sum_{i=1}^n \tr(E_i)|S_i)(S_i| &= \frac{1}{d+1}I_{d^2}	+ \left(1-\frac{1}{d+1}\right)\frac{1}{d}|I_d)(I_d|\\
&= \frac{1}{d+1}\Big(I_{d^2}+|I_d)(I_d|\Big).
\end{align}
Normalizing so that the trace of both sides is 1, we find
\begin{align}
\label{todesign}
	\sum_{i=1}^n w_i|S_i)(S_i| = \frac{1}{d(d+1)}\Big(I_{d^2} + |I_d)(I_d|\Big),
\end{align}
where $w_i = \frac{\tr(E_i)}{d}$ so that $\sum_i w_i = 1$.  
 Considering the right hand side of Eq.\! (\ref{todesign}), notice that we can rewrite it using the definition of vectorization as
 \begin{align}
	I_{d^2} + |I_d)(I_d| &= I_d \otimes I_d + \sum_{jk}|j,j\rangle\langle k, k|.
\end{align}
Taking the partial transpose of the second tensor factor and multiplying by $\frac{1}{2}$, we find:
\begin{align}
	 \frac{1}{2}\Big(I_d \otimes I_d + \sum_{jk}|j,k\rangle\langle k, j| \Big)=  \frac{1}{2}\Big(I_d \otimes I_d + \text{SWAP}\Big)= \Pi_{\text{sym}^2},
\end{align}
where $\Pi_{\text{sym}^2}$ is the projector onto the permutation symmetric subspace of two tensor factors. Considering the left hand side of Eq. \!(\ref{todesign}), we must also take the transpose of the second tensor factor. It is a useful lemma that for a pure states $S = |\psi\rangle\langle \psi|$, we have $|S)(S| = S \otimes S^T$. Recalling the factor of $\frac{1}{2}$, we obtain
\begin{align}
\sum_{i=1}^n w_i S_i \otimes S_i = \frac{2}{d(d+1)}\Pi_{\text{sym}^2}.	
\end{align}
Meanwhile, Schur's lemma tells us that the integral over tensor powers of pure states is
\begin{align}
\int |\psi\rangle\langle\psi|^{\otimes t} d\psi = \frac{\Pi_{\text{sym}^t}}{\tr(\Pi_{\text{sym}^t})} = \frac{1}{\binom{t+d-1}{d-1}}\Pi_{\text{sym}^t},
\end{align}
 which in the case of $t=2$, delivers us
 \begin{align}
 	\int |\psi\rangle\langle\psi|^{\otimes 2} d\psi = \frac{2}{d(d+1)}\Pi_{\text{sym}^2}=\sum_{i=1}^n w_i S_i \otimes S_i.
 \end{align}
Indeed, this is the defining equation for a weighted complex projective 2-design, showing that such designs coincide precisely with pure weight-morphophoric measure-and-prepare reference devices \cite{scott}.
\end{proof}

\begin{corollary}
	For a pure weight-morphophoric reference device in quantum theory, we have $\alpha=d+1$ so that $W(E)=\alpha P(E) + (1-\alpha)w$ becomes
	\begin{align}
	W(E) = (d+1)P(E) -dw.
	\end{align}
 For an unbiased measurement:
 	\begin{align}
	W(E) = (d+1)P(E) -\frac{d}{n}u.
	\end{align}
For a MIC, with $n=d^2$:
 	\begin{align}
	W(E) = (d+1)P(E) -\frac{1}{d}u.
	\end{align}
It has long been known that the unbiased complex projective 2-designs with the minimal number of elements correspond to pure SICs with $P(E_i|S_j)= \frac{d\delta_{ij} +1}{d(d+1)}$, if indeed SICs always exist. We thus recover the Urgleichung:
\begin{align}
P(A_i) = \sum_{j=1}^{d^2} P(A_i|E_j)\left[ (d+1)P(E_j) - \frac{1}{d}\right].
\end{align}
\end{corollary}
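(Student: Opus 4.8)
The plan is to obtain the corollary purely by specialization, reading everything off the preceding theorem together with the Protourgleichung established in Section 4. That theorem already fixes $\alpha = d+1$ for any pure weight-morphophoric reference device in quantum theory over $\mathbb{C}$, so the first display is immediate substitution: starting from $W(E) = \alpha P(E) + (1-\alpha)w$, put $\alpha = d+1$ and use $1-\alpha = -d$ to get $W(E) = (d+1)P(E) - dw$. No computation beyond this is needed for the first line.

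Next I would specialize the bias vector. For an unbiased reference measurement $w = \frac{1}{n}u$ by the definition of unbiasedness, which turns the previous line into $W(E) = (d+1)P(E) - \frac{d}{n}u$. Restricting further to a MIC, recall from the discussion of reference devices that the space of Hermitian operators has real dimension $d^2$, so a MIC in quantum theory has exactly $n = d^2$ elements; hence $\frac{d}{n} = \frac{1}{d}$ and $W(E) = (d+1)P(E) - \frac{1}{d}u$. All three displays are thus consequences of the single value $\alpha = d+1$ together with the meaning of the words ``unbiased'' and ``MIC.''

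Finally, to recover the Urgleichung I would invoke the classical fact (cited in the statement) that the unbiased weighted complex projective $2$-designs of minimal cardinality $d^2$ are exactly the pure SIC-POVMs, conditional on a SIC existing in dimension $d$. For such a device the defining symmetry gives the self-conditional probabilities $P(E_i|S_j) = \tr(E_i S_j) = \frac{d\delta_{ij}+1}{d(d+1)}$ once the rank-one $E_i$ are taken proportional to the $S_j$: this is the one-line check using $\tr(E_i S_j) = \frac{1}{d}|\langle\psi_i|\psi_j\rangle|^2$ and the $2$-design overlap $|\langle\psi_i|\psi_j\rangle|^2 = \frac{d\delta_{ij}+1}{d+1}$. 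Then feeding the MIC quasi-probability vector $W(E)_j = (d+1)P(E_j) - \frac{1}{d}$ into the general Protourgleichung $P(A_i) = \sum_j P(A_i|E_j)W(E)_j$, which is itself a consequence of the Born identity, yields exactly $P(A_i) = \sum_{j=1}^{d^2} P(A_i|E_j)\big[(d+1)P(E_j) - \frac{1}{d}\big]$.

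There is no real obstacle here: the argument is bookkeeping over identities already in hand. The one genuine caveat — a matter of scope rather than a gap in the logic — is that the final step presupposes the existence of a SIC-POVM in dimension $d$, which is known only dimension-by-dimension and remains open in general; absent a SIC one still has the unbiased-MIC form $W(E) = (d+1)P(E) - \frac{1}{d}u$ for whatever minimal unbiased $2$-design does exist, and it is precisely the apparent dimension-independent existence of pure SICs that makes the full Urgleichung remarkable.
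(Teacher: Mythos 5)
Your proposal is correct and follows essentially the same route as the paper, which treats this corollary as pure specialization: $\alpha=d+1$ from the preceding 2-design theorem, $w=\frac{1}{n}u$ for unbiasedness, $n=d^2$ for a MIC, and the known identification of minimal unbiased 2-designs with pure SICs, fed into $P(A)=P(A|E)\,W(E)$. Your added one-line check of $P(E_i|S_j)=\frac{d\delta_{ij}+1}{d(d+1)}$ and the caveat about SIC existence are consistent with the paper's own remarks.
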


\section{A Measure of Deformity}

As we've seen, the only difference between the law of total probability and the generalized Born rule is the presence of the Born matrix $\Phi$: compare
\begin{align}
P(A) = P(A|E)P(E) && \text{and} && P(A) = P(A|E)\Phi P(E).
\end{align}
Inspired by this, \cite{irreducible_difference} introduced the distance between the identity and $\Phi$,
\begin{align}
\mathcal{D}(\Phi) =||I - \Phi||	,
\end{align}
as a measure of this deformation, calling it the  \emph{quantumness}. The matrix norm is assumed to be unitarily invariant, e.g.\ a Schatten $p$-norm $||A||_p	= \left(\sum_i \sigma_i^p\right)^{\frac{1}{p}}$ where the $\sigma_i$'s are the singular values of $A$.
 By minimizing this quantity over all allowed reference devices with a certain number of outcomes, one obtains the \emph{irreducible quantumness}, which is a property of the underlying theory itself (e.g.\ $d$ dimensional quantum theory over $\mathbb{C}$), and which is meant to be a measure of nonclassicality on  grounds that classically one may always pick a reference measurement for which $\Phi=I$, simply reading off an objective property of the system and re-preparing it. The conclusion of \cite{irreducible_difference} was that parallel update SIC reference devices achieve the irreducible quantumness in all dimensions, at least if one considers only  MIC reference devices (with $d^2$ outcomes).
 
 We prefer here to call $\mathcal{D}(\Phi)$ the \emph{LTP deformation} since we are working in a broader context than quantum theory. If the states and effects of a reference device form a basis, as in the case of a MIC, then $\Phi = P(E|E)^{-1}$ univocally; but if we want to calculate $\mathcal{D}(\Phi)$ for ``overcomplete'' reference devices, we must make a choice of Born matrix. We introduce the \emph{minimal LTP deformation}
 \begin{align}
 	\mathcal{MD}(\Phi) = \min_{\Phi}||I - \Phi|| \quad \text{s.t.} \quad P\Phi P = P
 \end{align}
which is a constrained convex optimization problem with a unique solution, delivering the Born matrix which minimizes the LTP deformation for a choice of norm. Then, one can further obtain the \emph{irreducible LTP deformation} by minimizing over both reference devices with a certain number of outcomes as well as over $\Phi$ itself.

 For purposes of comparison, let us  define the \emph{natural Born matrix} for a depolarizing reference device. Just as before we defined an $\BT{E}$ such that $\B{S}\BT{E}=I$, we can similarly define an $\BT{S}$ such that $\BT{S}\B{E}=I$. Putting them together, yields
\begin{align}
\Phi &= \BT{E}\BT{S} = 
\left(\begin{array}{c|ccc}  
& & & \\
w & & \alpha\B{Z}^\dagger \\
& & &\\
\end{array}\right)\left(\begin{array}{ccc}  
1 & \dots & 1 \\
\hline
& \\
& \alpha\B{X}\\
& &\\
\end{array}\right)\\
&= wu^\dagger +\alpha^2 \B{Z}^\dagger \B{X}.
\end{align}
 The natural Born matrix indeed acts as an Protourgleichung:
\begin{align}
	\Phi P(E) &= \left[ wu^\dagger +\alpha^2 \B{Z}^\dagger \B{X}\right][w + \B{Z}^\dagger \rho_0]\\
	&= w+\alpha^2 \B{Z}^\dagger \B{X} \B{Z}^\dagger \rho_0\\
	&= w+\alpha \B{Z}^\dagger \rho_0.
\end{align}
Recall that $P=wu^\dagger +\B{Z}^\dagger \B{X}$. Since $\B{X}\B{Z}^\dagger = \frac{1}{\alpha}I$,  $\B{Z}^\dagger \B{X}$ and $\B{X}\B{Z}^\dagger$ have the same non-zero spectrum, and $\tr(wu^\dagger \B{Z}^\dagger \B{X})=0$, it follows that $P$ must have 1 eigenvalue equal to 1, $r-1$ eigenvalues equal to $\frac{1}{\alpha}$, and the rest  0. By the same argument, the natural Born matrix $\Phi$ will have 1 eigenvalue equal to 1, $r-1$ eigenvalues equal to $\alpha$, and the rest 0: in fact, this $\Phi$ forms a \emph{group inverse} of $P$, which is a \emph{spectral} inverse \cite{benisrael_greville, born_identity}.

In the overcomplete case, however, as we've observed, there are many alternative choices of $\Phi$. Upon picking a unitarily invariant norm, we can calculate the $\Phi$ which achieves the minimal LTP deformation $ \min_{\Phi}||I - \Phi|| \text{ s.t. } P\Phi P = P$. In fact, using the method of Lagrange multipliers, we can solve for the minimal $\Phi$ matrix with respect to the Schatten 2-norm analytically, at least for unbiased reference devices. 

\begin{theorem}
The unique Born matrix which minimizes $||I - \Phi||_2$ for an unbiased depolarizing reference device is
 	\begin{align}
 	\Phi =I + \frac{\alpha-1}{\alpha}\B{X}^\dagger (\B{X}\B{X}^\dagger)^{-1}(\B{Z}\B{Z}^\dagger)^{-1}\B{Z}.	
 	\end{align}
\end{theorem}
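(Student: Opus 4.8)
The plan is to treat this as the orthogonal projection of the identity onto the affine set of $\{1\}$-inverses of $P\equiv P(E|E)=\B{ES}$, exploiting the very explicit form that $P$, $\B{S}$, and $\B{E}$ take for an unbiased depolarizing device. Writing the device in Bloch form as in Section 4, unbiasedness means $w=\tfrac1n u$, so that $\B{E}$ has first column $w$ and remaining block $\B{Z}^\dagger$, while $\B{S}$ has first row $u^\dagger$ and remaining block $\B{X}$, with the three relations $\B{X}\B{Z}^\dagger=\tfrac1\alpha I_{r-1}$, $\B{X}u=0$ (i.e.\ $\B{X}w=0$), and $u^\dagger\B{Z}^\dagger=0$; hence $P=\tfrac1n uu^\dagger+\B{Z}^\dagger\B{X}$. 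Informational completeness forces $\B{Z}^\dagger$ to have full column rank and $\alpha\B{X}\B{Z}^\dagger=I_{r-1}$ forces $\B{X}$ to have full row rank, so $\B{X}\B{X}^\dagger$ and $\B{Z}\B{Z}^\dagger$ are invertible. These orthogonality facts are the engine that makes every product below telescope.

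First I would record that $\{\Phi:P\Phi P=P\}$ is a nonempty affine subspace (nonempty since the Moore--Penrose inverse is a $\{1\}$-inverse), so that minimizing $\|I-\Phi\|_2$ over it is the same as finding the point of that subspace nearest to $I$: the minimizer exists, is unique, and is characterized by the condition that $\Phi_\star-I$ be orthogonal, in the Frobenius inner product, to the direction subspace $\{D:PDP=0\}$. Since the adjoint of the linear map $D\mapsto PDP$ is $M\mapsto P^\dagger M P^\dagger$, this is exactly the (KKT/Lagrange-multiplier) condition $\Phi_\star-I=P^\dagger M P^\dagger$ for some $M$, equivalently: the column space of $\Phi_\star-I$ lies in $\operatorname{col}(P^\dagger)=\operatorname{row}(P)$ and its row space lies in $\operatorname{row}(P^\dagger)=\operatorname{col}(P)$. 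So it suffices to check that the claimed $\Phi$ both satisfies $P\Phi P=P$ and obeys this column/row-space condition; uniqueness then forces it to be the minimizer.

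Then I would verify the two properties for $\Phi_\star=I+\tfrac{\alpha-1}{\alpha}\B{X}^\dagger(\B{X}\B{X}^\dagger)^{-1}(\B{Z}\B{Z}^\dagger)^{-1}\B{Z}$. For the subspace condition, the correction term is $\B{X}^\dagger$ times an invertible $(r-1)\times(r-1)$ matrix times $\B{Z}$, so its column space is $\operatorname{col}(\B{X}^\dagger)=\operatorname{row}(\B{X})$ and its row space is $\operatorname{col}(\B{Z}^\dagger)$; because $\B{X}u=0$ and $u^\dagger\B{Z}^\dagger=0$ one has $\operatorname{row}(P)=\operatorname{span}(u)\oplus\operatorname{row}(\B{X})$ and $\operatorname{col}(P)=\operatorname{span}(u)\oplus\operatorname{col}(\B{Z}^\dagger)$, so both inclusions hold. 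For feasibility, using $\B{X}u=0$ and $\B{X}\B{X}^\dagger(\B{X}\B{X}^\dagger)^{-1}=I$, left multiplication by $P$ kills the $uu^\dagger$ part and the $\B{X}\B{X}^\dagger$ factors cancel, leaving $P\,\B{X}^\dagger(\B{X}\B{X}^\dagger)^{-1}(\B{Z}\B{Z}^\dagger)^{-1}\B{Z}=\B{Z}^\dagger(\B{Z}\B{Z}^\dagger)^{-1}\B{Z}$; then, using $\B{Z}u=0$ and $(\B{Z}\B{Z}^\dagger)^{-1}\B{Z}\B{Z}^\dagger=I$, right multiplication by $P$ collapses this to $\B{Z}^\dagger\B{X}$, so $P(\Phi_\star-I)P=\tfrac{\alpha-1}{\alpha}\B{Z}^\dagger\B{X}$. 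On the other side, $\B{X}\B{Z}^\dagger=\tfrac1\alpha I$ together with the same orthogonalities gives $P^2=\tfrac1n uu^\dagger+\tfrac1\alpha\B{Z}^\dagger\B{X}$, hence $P-P^2=\tfrac{\alpha-1}{\alpha}\B{Z}^\dagger\B{X}$, and therefore $P\Phi_\star P=P^2+(P-P^2)=P$.

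The only genuinely delicate step is the first: identifying the feasible set as affine, computing the adjoint of $D\mapsto PDP$ correctly, and thereby reducing the first-order conditions to the clean pair ``$\Phi$ feasible'' and ``$\Phi-I\in\{P^\dagger M P^\dagger\}$''—together with the remark that on an affine set with a strictly convex quadratic objective these conditions are sufficient, not merely necessary. Everything after that is the telescoping matrix algebra above, where the main hazard is bookkeeping: at each cancellation one must keep straight which of $\B{X}\B{Z}^\dagger$, $\B{X}\B{X}^\dagger$, $\B{Z}\B{Z}^\dagger$ is doing the work, and one must be scrupulous that unbiasedness enters only through $w\propto u$.
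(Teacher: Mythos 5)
Your proposal is correct, and it reaches the theorem by verification rather than derivation, which is a genuinely different organization from the paper's argument even though both rest on the same first-order optimality condition. The paper forms the Lagrangian $\|I-\Phi\|_2^2+\tr\big(\Lambda^T(P\Phi P-P)\big)$, obtains the stationarity condition $\Phi=I-\tfrac12 P^T\Lambda P^T$, and then \emph{solves} for the multiplier by positing the ansatz $\Lambda=\B{X}^\dagger\Lambda^\prime\B{Z}$ with $\Lambda^\prime\propto(\B{X}\B{X}^\dagger)^{-1}(\B{Z}\B{Z}^\dagger)^{-1}$, fixing the constant from $P^2-P=\tfrac{1-\alpha}{\alpha}\B{Z}^\dagger\B{X}$; unbiasedness enters there exactly where it enters for you, in killing the cross terms built from $u$ and $w$. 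You instead observe that stationarity is precisely the statement $\Phi-I\in\operatorname{range}(M\mapsto P^T M P^T)$, i.e.\ that the minimizer is the orthogonal projection of $I$ onto the affine set of $\{1\}$-inverses, and you check membership via the column/row-space inclusions (which hold because $\alpha\B{X}P=\B{X}$ and $\alpha P\B{Z}^\dagger=\B{Z}^\dagger$) together with feasibility, your computation $P(\Phi-I)P=\tfrac{\alpha-1}{\alpha}\B{Z}^\dagger\B{X}=P-P^2$ matching the paper's algebra. What your route buys is that no multiplier ansatz has to be divined and the sufficiency of the first-order condition is explicit (projection onto an affine set with a strictly convex objective), where the paper only gestures at convexity; what it costs is that the formula for $\Phi$ must be known in advance, whereas the paper's computation produces it. One small point to make airtight: you use without proof that any $N$ with $\operatorname{col}(N)\subseteq\operatorname{col}(P^T)$ and $\operatorname{row}(N)\subseteq\operatorname{row}(P^T)$ can be written as $P^T M P^T$; this is standard (take $M=(P^T)^{+}N(P^T)^{+}$), and with that remark your argument is complete.
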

\begin{proof}
Define a Lagrangian
\begin{align}
\mathscr{L}(\Phi, \Lambda) &= ||I-\Phi||_2^2 + \tr\Big(\Lambda^T(P\Phi P-P)\Big)\\
&= n - 2\tr(\Phi) + ||\Phi||^2 + \tr(\Lambda^T P\Phi P) - \tr(\Lambda^T P)
\end{align}
where $\Lambda$ is a matrix of Lagrange multipliers.
Recalling that
\begin{align}
\frac{\partial \tr(X)}{\partial X} = I &&	\frac{\partial ||X||^2}{\partial X} = 2X && \frac{\partial \tr(AX)}{\partial X} = A^T, 
\end{align}
we can differentiate with respect to $\Phi$, and demand that the result vanishes.
\begin{align}
\frac{\partial\mathscr{L}(\Phi, \Lambda) }{\partial \Phi}	&= -2I+2\Phi + (P\Lambda^TP)^T=0
\end{align}
yields
\begin{align}
\Phi = I - \frac{1}{2}	P^T\Lambda P^T,
\end{align}
where, considering that $P\Phi P = P$, $\Lambda$ must satisfy
\begin{align}
\label{lambda}
P^2-P=\frac{1}{2}PP^T \Lambda P^TP.
\end{align}
Considering the LHS of Eq.\ (\ref{lambda}), and recalling that $\B{X}\B{Z}^\dagger = \B{Z}\B{X}^\dagger = \frac{1}{\alpha}I$ and $u^\dagger \B{Z}=\B{X}w=0$, we find
\begin{align}
P^2 &= 	\left[wu^\dagger + \B{Z}^\dagger \B{X}\right]\left[wu^\dagger + \B{Z}^\dagger \B{X}\right]= wu^\dagger + \frac{1}{a}\B{Z}^\dagger \B{X},
\end{align}
so that
\begin{align}
P^2 -P &=  wu^\dagger + \frac{1}{a}\B{Z}^\dagger \B{X}-wu^\dagger - \B{Z}^\dagger \B{X} = \frac{1-\alpha}{\alpha}\B{Z}^\dagger \B{X}.
\end{align}
As for the RHS of Eq.\ (\ref{lambda}), we first note that
\begin{align}
P P^T &= 	\left[ wu^\dagger + \B{Z}^\dagger \B{X}\right]\left[ uw^\dagger + \B{X}^\dagger \B{Z}\right]\\
&= w u^\dagger u w^\dagger + wu^\dagger \B{X}^\dagger \B{Z}+\B{Z}^\dagger \B{X}uw^\dagger + \B{Z}^\dagger \B{X}\B{X}^\dagger \B{Z}\\
P^TP &= 	\left[ uw^\dagger + \B{X}^\dagger \B{Z}\right]\left[ wu^\dagger + \B{Z}^\dagger \B{X}\right]\\
&= u w^\dagger w u^\dagger + uw^\dagger \B{Z}^\dagger \B{X}+\B{X}^\dagger \B{Z}wu^\dagger + \B{X}^\dagger \B{Z}\B{Z}^\dagger \B{X}.
\end{align}
 Let us assume that $\Lambda = \B{X}^\dagger \Lambda^\prime \B{Z}$, so that the terms in $PP^T$ which end in $w^\dagger$ and the terms in $P^T P$ which begin with $u$ all vanish, leaving
\begin{align}
	&=\frac{1}{2}\Big[ wu^\dagger \B{X}^\dagger \B{Z}+ \B{Z}^\dagger \B{X}\B{X}^\dagger \B{Z}\Big]\B{X}^\dagger \Lambda^\prime \B{Z}\Big[\B{X}^\dagger \B{Z}wu^\dagger+ \B{X}^\dagger \B{Z}\B{Z}^\dagger \B{X}\Big]\\
	&=\frac{1}{2\alpha^2}\Big[ wu^\dagger \B{X}^\dagger + \B{Z}^\dagger \B{X}\B{X}^\dagger \Big] \Lambda^\prime \Big[\B{Z}wu^\dagger+ \B{Z}\B{Z}^\dagger \B{X}\Big].
\end{align}
If we further assume that the measurement is unbiased, so that $w=\frac{1}{n}u$, then this becomes simply $\frac{1}{2\alpha^2}\Big[ \B{Z}^\dagger \B{X}\B{X}^\dagger \Big] \Lambda^\prime \Big[\B{Z}\B{Z}^\dagger \B{X}\Big]$.
Recall that informational completeness implies that $\B{XX}^\dagger$ and $\B{ZZ}^\dagger$ are invertible. So let $\Lambda^\prime = \beta(\B{X}\B{X}^\dagger)^{-1} (\B{Z}\B{Z}^\dagger)^{-1}$. As for the constant $\beta$, we require
\begin{align}
 \frac{1}{2\alpha^2} \Big[\B{Z}^\dagger \B{X} \B{X}^\dagger \Big]\beta (\B{X}\B{X}^\dagger)^{-1} (\B{Z}\B{Z}^\dagger)^{-1}\Big[ \B{Z}\B{Z}^\dagger \B{X}\Big]=\frac{\beta}{2\alpha^2}\B{Z}^\dagger \B{X} \overset{!}{=}\frac{1-\alpha}{\alpha}\B{Z}^\dagger \B{X}.
\end{align}
Thus $\beta = 2\alpha(1-\alpha)$, and
\begin{align} 
\Lambda &= 2\alpha(1-\alpha)\B{X}^\dagger (\B{X}\B{X}^\dagger)^{-1}(\B{Z}\B{Z}^\dagger)^{-1}\B{Z}.
\end{align}
Putting it all together, the unique Born matrix which minimizes the LTP deformation with respect to the Frobenius norm for an unbiased depolarizing reference device is
\begin{small}
\begin{align}
\Phi &= I - \frac{1}{2}	P^T\Lambda P^T\\
&= I - \frac{1}{2}\left[ \frac{1}{n}uu^\dagger + \B{X}^\dagger \B{Z}\right]\left[2\alpha(1-\alpha)\B{X}^\dagger (\B{X}\B{X}^\dagger)^{-1}(\B{Z}\B{Z}^\dagger)^{-1}\B{Z}\right]\left[ \frac{1}{n}uu^\dagger + \B{X}^\dagger \B{Z}\right]\\
&= I+\alpha(\alpha-1)\B{X}^\dagger \B{Z}\Big[ \B{X}^\dagger (\B{X}\B{X}^\dagger)^{-1}(\B{Z}\B{Z}^\dagger)^{-1}\B{Z}\Big] \B{X}^\dagger \B{Z}\\
&= I + \frac{\alpha-1}{\alpha}\B{X}^\dagger (\B{X}\B{X}^\dagger)^{-1}(\B{Z}\B{Z}^\dagger)^{-1}\B{Z}.
\end{align}
\end{small}
Uniqueness follows from the fact that the optimization problem is convex.
\end{proof}

But how does this Born matrix act on a probability vector? Recalling that $P(E)=\B{E}|\rho)=w + \B{Z}^\dagger \rho_0 = \frac{1}{n} u + \B{Z}^\dagger \rho_0$ in the unbiased case, we have
\begin{align}
\Phi P(E) &= \Big[ I + \frac{\alpha-1}{\alpha}\B{X}^\dagger (\B{X}\B{X}^\dagger)^{-1}(\B{Z}\B{Z}^\dagger)^{-1}\B{Z}\Big]\Big[	\frac{1}{n}u + \B{Z}^\dagger \rho_0\Big]\\
&= \frac{1}{n} u + \B{Z}^\dagger \rho_0 + \frac{\alpha-1}{\alpha}\B{X}^\dagger (\B{X}\B{X}^\dagger)^{-1}(\B{Z}\B{Z}^\dagger)^{-1}\B{Z}\B{Z}^\dagger \rho_0\\
&= \frac{1}{n} u + \Big(\B{Z}^\dagger + \frac{\alpha-1}{\alpha}\B{X}^\dagger (\B{X}\B{X}^\dagger)^{-1}\Big)\rho_0.
\end{align}
In other words, this Born matrix does \emph{not} act as an Protourgleichung! 

That said, there is a special case. We can always choose $\B{X} = \frac{1}{\alpha}(\B{Z}^\dagger)^+=\frac{1}{a}(\B{Z}\B{Z}^\dagger)^{-1}\B{Z}$---that is, $\B{X}$ can be taken proportional to the pseudoinverse of $\B{Z}^\dagger$, which is a left inverse since $\B{Z}^\dagger$ has linearly independent columns. Symmetrically, we have $\B{Z}^\dagger=\frac{1}{\alpha}\B{X}^\dagger(\B{X}\B{X}^\dagger)^{-1}$. Thus
\begin{align}
\Phi &= I + \alpha(\alpha-1)\B{Z}^\dagger \B{X}.
\end{align}
and
\begin{align}
\Phi P(E) &= \frac{1}{n} u + \Big(\B{Z}^\dagger + (\alpha-1)\B{Z}^\dagger \Big)\rho_0= \frac{1}{n} u + \alpha\B{Z}^\dagger\rho_0.
\end{align}
In this special case, the minimal $\Phi$ acts as a Protourgleichung. In fact, even if the measurement is biased and the states were not obtained from the pseudoinverse, we can  nevertheless always use a Born matrix of the form $\Phi =  I + \alpha(\alpha-1)\B{Z}^\dagger \B{X}$---we'll call such a Born matrix \emph{simple}. Indeed,
\begin{align}
P\Phi P &= \left[ wu^\dagger + \B{Z}^\dagger \B{X}\right] \left[I + \alpha(\alpha-1)\B{Z}^\dagger \B{X} \right] \left[ wu^\dagger + \B{Z}^\dagger \B{X}\right] \\
&=wu^\dagger +\frac{1}{\alpha}\B{Z}^\dagger \B{X}+\alpha(\alpha-1)\left(\frac{1}{\alpha^2}\right)\B{Z}^\dagger \B{X}\\
&= wu^\dagger +\B{Z}^\dagger \B{X} = P.
\end{align}

So what have we learned? Recall that for a MIC, we have
\begin{align}
 P(E|E)&=\frac{1}{\alpha}I + \left(1-\frac{1}{\alpha}\right)wu^\dagger,
\end{align}
and it is not hard to confirm that the unique $\Phi$ matrix is
\begin{align}
\Phi &= \alpha I + (1-\alpha)wu^\dagger,	
\end{align}
which clearly acts as an Urgleichung. For overcomplete reference devices, however, we must make a choice of $\Phi$. On the one hand, any depolarizing reference device has at least two Born matrices which act as a Protourgleichung,
\begin{align}
\Phi =wu^\dagger +\alpha^2 \B{Z}^\dagger \B{X} && \Phi = 	I + \alpha(\alpha-1)\B{Z}^\dagger \B{X},
\end{align}
which clearly yield different values for the LTP deformation $||I-\Phi||$: the measure is, in this sense, over-sensitive, distinguishing the two matrices despite the fact that they yield the same quasi-probabilities. On the other hand, the Born matrix which minimizes the LTP deformation with respect to the Frobenius norm
\begin{align}
\Phi =I + \frac{\alpha-1}{\alpha}\B{X}^\dagger (\B{X}\B{X}^\dagger)^{-1}(\B{Z}\B{Z}^\dagger)^{-1}\B{Z}
\end{align}
does not in general act as a Protourgleichung.   This arguably represents a flaw in the metric itself, if the original purpose was to quantify the \emph{simplicity} of the LTP deformation. Indeed, these examples show that the two concepts (minimal deformation with respect to a matrix norm, and the form of the Urgleichung itself) are in fact distinct. On the other hand, the exercise has allowed us to identify a distinguished set of reference states for any unbiased informationally complete reference measurement, namely those given by
\begin{align}
\B{S} =\left(\begin{array}{ccc}  
1 & \dots & 1 \\
\hline
& \\
& \frac{1}{a}(\B{Z}\B{Z}^\dagger)^{-1}\B{Z} \\
& &\\
\end{array}\right)
\end{align}
for which the minimal $\Phi$ (with respect to the 2-norm) does in fact act as a Protourgleichung. In fact, if the measurement is weight-morphophoric, then this just gives us a parallel update reference device.
\section{Conclusion}

In this work, we've established that the Protourgleichung 
\begin{align}
P(A_i) = \sum_{j=1}^{d^2} P(A_i|E_j)\Big[ \alpha P(E_j) + (1-\alpha)w_j\Big]
\end{align}
is ubiquitous. In any GPT, and for any reference measurement, one can always find a set of reference states which yield an expression of this form. The key requirement is that the reference device, considered as a measure-and-prepare channel, must be depolarizing---and this requirement can always be satisfied. Considering the case where the reference states can be chosen proportional to the effects (modulo a depolarization parameter), we related our construction to so-called ``morphophoric measurements,'' and ultimately to complex projective 2-designs, from which we recovered the standard Urgleichung. In light of these results, we reconsidered a matrix-norm-based measure of LTP deformity, showing by example that it fails to adequately capture the essence of the Protourgleichung, even as it still holds interesting lessons about the nature of reference devices to teach us.

The Protourgleichung is the simplest normative constraint between probability assignments $P(A_i)$, $P(A_i|E_j)$, and $P(E_j)$ which is compatible with the assumptions of generalized probabilistic theories: it is universally available whether one works in quantum theory, classical probability theory, classical theory with an epistemic restriction, or any of the more exotic members of the GPT community for which, indeed, no noncontextual ontological model is possible. For a broader look at the philosophy behind nonclassical reference devices themselves, consider \cite{born_identity}. Here we observe that it is  illuminating to rewrite the Protourgleichung as 
\begin{align}
P(A)_\rho = P(A|E)\Big[\alpha P(E)_\rho+(1-\alpha)P(E)_M\Big],
\end{align}
where $P(E)_M = w$, the reference outcome probabilities given the maximally mixed state, which we can rearrange as 
\begin{align}
P(A)_\rho - P(A|E)P(E)_M&=\alpha\Big[ P(A|E)P(E)_\rho - P(A|E)P(E)_M\Big]	\\
P(A)^{\text{(Born)}}_\rho - P(A)_M^{\text{(LTP)}}&= \alpha\Big[ P(A)^{\text{(LTP)}}_\rho - P(A)^{\text{(LTP)}}_M\Big].	
\end{align}
In other words, after subtracting off the probabilities $P(A)^{\text{(LTP)}}_M$, which depend on the bias of the measurement, the Protourgleichung's predictions and the law of total probability's predictions are actually proportional to each other: nothing simpler or more elegant could be imagined. 

Nevertheless, the Urgleichung itself is a child of quantum theory over $\mathbb{C}$, its particular parameter ($\alpha=d+1$) and the cardinality of its outcomes ($d^2$) hinting at quantum theory's special structure, and whispering of the possibility of \emph{pure} SIC reference devices. Indeed, these latter arguably represent the simplest nonclassical generalization of the trivial classical reference device, which simply reads off an objective property of a system, and re-prepares a state with exactly that same property. 

Consider that, in the terms we've developed here, the trivial classical reference device has states and effects which are simply the $r$-dimensional basis vectors. It is a parallel update device, with reference states proportional to reference effects. Considered as a measure-and-prepare channel, it is simply the identity channel, which is trivially depolarizing, and the Protourgleichung collapses to the LTP. The measurement is trivially morphophoric, simply mapping the probability simplex exactly to itself. It is unbiased. It is a MIC. It is equiangular. Its reference states are extremal, corresponding to the vertices of the probability simplex itself. The reference effects are extremal too. Each reference state can be perfectly distinguished by some effect\footnote{That is, there is some effect which yields probability 1 on that state.}, and each reference effect can be perfectly distinguished by some state. Moreover, the reference states and reference effects perfectly distinguish each other. Finally, such a device exists in any dimension, where the maximal number of pairwise perfectly distinguishable states scales with the dimension.

Compare this to the properties of a SIC reference device in quantum theory over $\mathbb{C}$, which is a parallel update device, which, considered as a measure-and-prepare channel, is depolarizing with parameter $\frac{1}{\alpha} = \frac{1}{d+1}$. It is morphophoric, unbiased, a MIC with $d^2$ elements. It is equiangular. Its reference states and reference effects are extremal, corresponding to the vertices of a simplex. Each reference state can be perfectly distinguished by some effect, and each reference effect can be perfectly distinguished by some state. And finally, such a device (apparently) exists in any dimension, where the maximal number of pairwise perfectly distinguishable states scales with the dimension\footnote{The point of emphasizing that the maximal number of pairwise perfectly distinguishable states scales with the dimension is to exclude, for example, so-called \emph{spin-factor} theories, whose state spaces are $n$-dimensional balls, and for which SICs exists trivially. In such theories, regardless of the dimension, the maximally number of pairwise perfectly distinguishable states is always 2. Such a theory is an example of a \emph{Euclidean Jordan Algebra} \cite{stacey2023quantum, wilce2017royal} of which quantum theory over $\mathbb{C}$ is also an example, as well as quantum theory over $\mathbb{R}, \mathbb{H}$, and $d=3$ quantum theory over $\mathbb{O}$. For quantum theory over $\mathbb{R}$, pure SICs are only known to exist in $d=2,3,7,23$, and are known \emph{not} to exist in most dimensions. For quantum theory over $\mathbb{H}$, pure SICs are only known to exist in $d=2,3$, and there is numerical evidence suggesting that they don't exist already in $d=4$ \cite{Cohn_2016}. There is a pure SIC in octonionic $d=3$, but this is the only dimension in which such a theory can be constructed.}.

The only differences, then, are the values of the depolarizing parameter, the fact that the state space is mapped to a subset of the probability simplex, and the fact that the reference states and reference effects don't perfectly distinguish each other. Thus pure SIC reference devices are arguably the closest analogue to the trivial classical reference device, and this is true not just in one theory, but in a whole infinite sequence of them, for each Hilbert space dimension $d$. Indeed, if one were trying to generalize classical probability theory as minimally as possible, incorporating the idea that nature forgoes hidden variables\footnote{Or more positively, as Blake Stacey would put it, respecting nature's \emph{vitality} \cite{stacey2023quantum}.}, one could imagine nothing simpler than positing the existence of pure SIC reference devices, with their attendant, the Urgleichung.

\section*{Acknowledgements}

This work was supported in part by National Science Foundation Grant 2210495 and in part by Grant 62424 from the John Templeton Foundation. The opinions expressed in this publication are those of the author and do not necessarily reflect the views of the John Templeton Foundation.

\section*{Appendix: Real Vector Space QM}

A close cousin of standard quantum theory is quantum mechanics over $\mathbb{R}$. Here there is a subtlety if one works with vectorized operators. For quantum mechanics over $\mathbb{C}$, if we take $P(E|S) = \B{ES}$ where the rows of $\B{E}$ are vectorized effects $(E_i|$ and the columns of $\B{S}$ are vectorized states $|S_i)$, then $\B{ES}$ will be a \emph{full rank factorization} \cite{full_rank_factorization}: this is because the real space of Hermitian matrices is $d^2$, but also the vectorized operators are $d^2$ dimensional. On the other hand, if one works over $\mathbb{R}$, the state space is the space of real symmetric matrices which has dimension $d(d+1)/2$ even as the vectorized operators remain $d^2$ dimensional. In this latter case, $\B{ES}$ built from vectorized operators will \emph{not} be a full rank factorization. One could instead work with the states and effects expanded in an operator basis of $d(d+1)/2$ elements; otherwise, one will have to modify the Born identity to $\bold{S}\Phi \bold{E} = \Pi_{\text{sym}^2}$ where $\Pi_{\text{sym}^2}$ is the projector onto symmetric subspace on two factors, i.e.\ onto the space of vectorized symmetric matrices. This issue is quite general if one doesn't work with full rank factorizations.

Thus for a depolarizing channel in real vector space quantum mechanics, we have (in terms of vectorized operators)
\begin{align}
	\sum_{i=1}^n |S_i)(E_i|&=	 \frac{1}{\alpha}\Pi_{\text{sym}^2} + \left(1-\frac{1}{\alpha}\right)\frac{1}{d}|I_d)(I_d|.
\end{align}
This theory is also self-dual, so  taking the trace of both sides, and assuming our states are pure, we find
\begin{align}
	d =	 \frac{1}{\alpha}\frac{d(d+1)}{2} + 1-\frac{1}{\alpha} \Longrightarrow \alpha = \frac{d+2}{2},
\end{align}
and
\begin{align}
	\sum_{i=1}^n |S_i)(E_i|&=	 \frac{2}{d+2}\Pi_{\text{sym}^2} + \frac{1}{d+2}|I_d)(I_d|\\
	&= \frac{1}{d+2}\Big(I_{d^2} +\text{SWAP} +|I_d)(I_d|\Big).
\end{align}
Again dividing by $\frac{1}{d}$ so the trace of both sides is 1, and taking the partial transpose, which actually leaves both sides of the equation invariant, we obtain
\begin{align}
	 \sum_{i=1}^n w_i S_i \otimes S_i =\frac{1}{d(d+2)}\Big(I_{d^2} +\text{SWAP} +|I_d)(I_d|\Big),
\end{align}
which according to \cite{harrow} coincides precisely with $\int |\psi\rangle\langle\psi|^{\otimes 2} d\psi$ where the integral is taken over pure states in the real vector space theory. Thus (pure) measure-and-prepare weight-morphophoric references devices for quantum theory over $\mathbb{R}$ coincide with real projective 2-designs.

On the subject of real vector space quantum mechanics, \cite{weiss} conducted a numerical search for MIC reference devices minimizing $||I - \Phi||$ in $d=4$. It is known that in this dimension one cannot construct a SIC-POVM out of pure effects: the maximum of 6 equiangular lines is not enough to furnish the 10 elements required for a MIC. In fact, in $d=4$, the minimum number of elements in a pure biased 2-design is 11, and the minimum number of elements in a pure unbiased 2-design is 12 \cite{HUGHES202184}. These efforts were unable to yield minimal reference devices whose Born matrix acts as an Protourgleichung, a fact which is now contextualized by the present work. Indeed, such reference devices \emph{can} be found, although they will not minimize $||I - \Phi||$. It is worth noting that subsequent numerical searches suggest that in this case, if the reference effects are pure, then the depolarizing reference states must be mixed. It is also worth noting that unlike in \cite{irreducible_difference}, the minimal reference device does depend on the choice of matrix norm, which is likely to be true in the general overcomplete case.

\printbibliography

@book{benisrael_greville,
	title={Generalized Inverses: Theory and Applications},
	author={Ben-Israel, A. and Greville, T.N.E. and Canadian Mathematical Society},
	isbn={9780387002934},
	lccn={73023078},
	series={CMS Books in Mathematics},
	url={https://books.google.com/books?id=o\_zXUXaqGU8C},
	year={2003},
	publisher={Springer}
}

@article{caves_etal,
	doi = {10.1103/physreva.64.042315},
	url = {https://doi.org/10.1103%2Fphysreva.64.042315},
	year = 2001,
	month = {sep},
	publisher = {American Physical Society ({APS})},
	volume = {64},
	number = {4},
	author = {Pranaw Rungta and V. Bu{\v{z}}ek and Carlton M. Caves and M. Hillery and G. J. Milburn},
	title = {Universal state inversion and concurrence in arbitrary dimensions},
	journal = {Physical Review A}
}

@article{irreducible_difference,
	doi = {10.1103/physrevresearch.2.013074},
	url = {https://doi.org/10.1103%2Fphysrevresearch.2.013074},
	year = 2020,
	month = {Jan.},
	publisher = {American Physical Society ({APS})},
	volume = {2},
	number = {1},
	author = {John B. DeBrota and Christopher A. Fuchs and Blake C. Stacey},
	title = {Symmetric informationally complete measurements identify the irreducible difference between classical and quantum systems},
	journal = {Physical Review Research}
}

@article{full_rank_factorization,
	ISSN = {0025570X, 19300980},
	URL = {http://www.jstor.org/stable/2690882},
	author = {R. Piziak and P. L. Odell},
	journal = {Mathematics Magazine},
	number = {3},
	pages = {193--201},
	publisher = {Mathematical Association of America},
	title = {Full Rank Factorization of Matrices},
	urldate = {2023-01-27},
	volume = {72},
	year = {1999}
}

@article{scott,
	doi = {10.1088/0305-4470/39/43/009},
	url = {https://doi.org/10.1088%2F0305-4470%2F39%2F43%2F009},
	year = 2006,
	month = {Oct.},
	publisher = {{IOP} Publishing},
	volume = {39},
	number = {43},
	pages = {13507--13530},
	author = {A J Scott},
	title = {Tight informationally complete quantum measurements},
	journal = {Journal of Physics A: Mathematical and General}
}

@book{waldron,
	title={An Introduction to Finite Tight Frames},
	author={Waldron, S.F.D.},
	isbn={9780817648152},
	lccn={2019301403},
	series={Applied and Numerical Harmonic Analysis},
	url={https://books.google.com/books?id=IlNKDwAAQBAJ},
	year={2018},
	publisher={Springer New York}
}

@article{muller,
	doi = {10.21468/scipostphyslectnotes.28},
	url = {https://doi.org/10.21468%2Fscipostphyslectnotes.28},
	year = 2021,
	month = {Mar.},
	publisher = {Stichting {SciPost}},
	author = {Markus Müller},
	title = {Probabilistic theories and reconstructions of quantum theory},
	journal = {{SciPost} Physics Lecture Notes}
}

@misc{lami,
	doi = {10.48550/ARXIV.1803.02902},
	url = {https://arxiv.org/abs/1803.02902},
	author = {Lami, Ludovico},
	keywords = {Quantum Physics (quant-ph), Mathematical Physics (math-ph), Functional Analysis (math.FA), FOS: Physical sciences, FOS: Physical sciences, FOS: Mathematics, FOS: Mathematics},
	title = {Non-classical correlations in quantum mechanics and beyond},
	publisher = {arXiv},
	year = {2018},
	copyright = {arXiv.org perpetual, non-exclusive license}
}

@article{morpho,
  doi = {10.48550/ARXIV.2302.04957},
  url = {https://arxiv.org/abs/2302.04957},
  author = {Szymusiak, Anna and Słomczyński, Wojciech},
  keywords = {Quantum Physics (quant-ph), FOS: Physical sciences, FOS: Physical sciences},
  title = {Can QBism exist without Q? Morphophoric measurements in generalised probabilistic theories},
  publisher = {arXiv},
  year = {2023},
  copyright = {Creative Commons Attribution 4.0 International}
}

@article{Ferrie_2008,
	doi = {10.1088/1751-8113/41/35/352001},
	url = {https://doi.org/10.1088%2F1751-8113%2F41%2F35%2F352001},
	year = 2008,
	month = {Jul.},
	publisher = {{IOP} Publishing},
	volume = {41},
	number = {35},
	pages = {352001},
	author = {Christopher Ferrie and Joseph Emerson},
	title = {Frame representations of quantum mechanics and the necessity of negativity in quasi-probability representations},
	journal = {Journal of Physics A: Mathematical and Theoretical}
}

@article{Ferrie_2011,
	doi = {10.1088/0034-4885/74/11/116001},
	url = {https://doi.org/10.1088%2F0034-4885%2F74%2F11%2F116001},
	year = 2011,
	month = {oct},
	publisher = {{IOP} Publishing},
	volume = {74},
	number = {11},
	pages = {116001},
	author = {Christopher Ferrie},
	title = {Quasi-probability representations of quantum theory with applications to quantum information science},
	journal = {Reports on Progress in Physics}
}

@article{morphophoric,
	doi = {10.22331/q-2020-09-30-338},
	url = {https://doi.org/10.22331%2Fq-2020-09-30-338},
	year = 2020,
	month = {Sep.},
	publisher = {Verein zur Forderung des Open Access Publizierens in den Quantenwissenschaften},
	volume = {4},
	pages = {338},
	author = {Wojciech S{\l}omczy{\'{n}}ski and Anna Szymusiak},
  title = {Morphophoric {POVMs}, generalised qplexes, and 2-designs},
	journal = {Quantum}
}

@misc{barrett2009finetti,
      title={The de Finetti theorem for test spaces}, 
      author={Jonathan Barrett and Matthew Leifer},
      year={2009},
      eprint={0712.2265},
      archivePrefix={arXiv},
      primaryClass={quant-ph}
}

@misc{harrow,
      title={The Church of the Symmetric Subspace}, 
      author={Aram W. Harrow},
      year={2013},
      eprint={1308.6595},
      archivePrefix={arXiv},
      primaryClass={quant-ph}
}

@article{weiss,
      title={Quantum mechanics? It's all fun and games until someone loses an $i$}, 
      author={Christopher A. Fuchs and Maxim Olshanii and Matthew B. Weiss},
       pubstate={forthcoming},
      journal={Asian J. Phys.},
      year={2022},
      eprint={2206.15343},
      archivePrefix={arXiv},
      primaryClass={quant-ph},

}

@article{HUGHES202184,
title = {Spherical (t,t)-designs with a small number of vectors},
journal = {Linear Algebra and its Applications},
volume = {608},
pages = {84-106},
year = {2021},
issn = {0024-3795},
doi = {https://doi.org/10.1016/j.laa.2020.08.010},
url = {https://www.sciencedirect.com/science/article/pii/S0024379520303839},
author = {Daniel Hughes and Shayne Waldron},
keywords = {(Weighted) spherical designs, Integration (cubature) rules for the sphere, Spherical -designs, Spherical half-designs, Finite tight frames, Tight spherical designs, Euclidean -design, Projective -designs, MUBs (mutually unbiased bases), SICs, Equiangular lines, Highly symmetric tight frames, Complex reflection groups},
abstract = {For t∈{1,2,…} fixed, a natural class of spherical designs is given by the vectors v1,…,vn in Fd=Rd,Cd (not all zero) which give equality in the bound∑j=1n∑k=1n|〈vj,vk〉|2t≥ct(Fd)(∑ℓ=1n‖vℓ‖2t)2, where ct(Fd) is a known constant. These spherical (t,t)-designs integrate a space of homogeneous polynomials of degree 2t, and are variously known as real spherical half-designs of order 2t, complex (projective) t-designs, complex spherical semi-designs, and as tight frames when t=1. Little is known about the minimal number of vectors n for such a design. Here we report on the results of a numerical search for (t,t)-designs with a minimal number of vectors. In some cases, we obtain the designs explicitly as an orbit of a unitary action of a finite group on the sphere. We also list all the currently known (t,t)-designs. It is shown that many of these belong to a family of designs which we construct from the complex reflection groups. This family includes several new spherical (t,t)-designs with a small number of vectors.}
}

@misc{born_identity,
author={Matthew Weiss},
title={The {B}orn {I}dentity},
 pubstate={forthcoming}
}

@misc{selby2022opensource,
      title={An open-source linear program for testing nonclassicality}, 
      author={John H. Selby and Elie Wolfe and David Schmid and Ana Belén Sainz},
      year={2022},
      eprint={2204.11905},
      archivePrefix={arXiv},
      primaryClass={quant-ph}
}

@article{Schmid_2021,
	doi = {10.1103/prxquantum.2.010331},
  
	url = {https://doi.org/10.1103%2Fprxquantum.2.010331},
  
	year = 2021,
	month = {Feb.},
  
	publisher = {American Physical Society ({APS})},
  
	volume = {2},
  
	number = {1},
  
	author = {David Schmid and John H. Selby and Elie Wolfe and Ravi Kunjwal and Robert W. Spekkens},
  
	title = {Characterization of Noncontextuality in the Framework of Generalized Probabilistic Theories},
  
	journal = {{PRX} Quantum}
}

@article{fuchs2023qbism,
      title={QBism, Where Next?}, 
      author={Christopher A. Fuchs},
      year={2023},
      eprint={2303.01446},
      archivePrefix={arXiv},
      primaryClass={quant-ph},
      journal={Phenomenology and QBism: New Approaches to Quantum Mechanics, edited by P. Berghofer and H. A. Wiltsche (Routledge, New York, 2023), pp. 78--143.}
}

@misc{fuchs2010qbism,
      title={QBism, the Perimeter of Quantum Bayesianism}, 
      author={Christopher A. Fuchs},
      year={2010},
      eprint={1003.5209},
      archivePrefix={arXiv},
      primaryClass={quant-ph}
}

@article{fuchs2009quantumbayesian,
      title={Quantum-Bayesian Coherence}, 
      author={Christopher A. Fuchs and Ruediger Schack},
      year={2013},
      eprint={0906.2187},
      archivePrefix={arXiv},
      primaryClass={quant-ph},
      journal={Rev.\ Mod.\ Phys.\ {\bf 85}, 1693--1715}
}

@misc{stacey2023quantum,
      title={Quantum Theory as Symmetry Broken by Vitality}, 
      author={Blake C. Stacey},
      year={2023},
      eprint={1907.02432},
      archivePrefix={arXiv},
      primaryClass={quant-ph}
}

@article{Cohn_2016,
   title={Optimal simplices and codes in projective spaces},
   volume={20},
   ISSN={1465-3060},
   url={http://dx.doi.org/10.2140/gt.2016.20.1289},
   DOI={10.2140/gt.2016.20.1289},
   number={3},
   journal={Geometry; Topology},
   publisher={Mathematical Sciences Publishers},
   author={Cohn, Henry and Kumar, Abhinav and Minton, Gregory},
   year={2016},
   month=jul, pages={1289–1357} }

@misc{wilce2017royal,
      title={A Royal Road to Quantum Theory (or Thereabouts)}, 
      author={Alexander Wilce},
      year={2017},
      eprint={1606.09306},
      archivePrefix={arXiv},
      primaryClass={quant-ph}
}

@article{Appleby_2017,
	title={SICs and Algebraic Number Theory},
	volume={47},
	ISSN={1572-9516},
	url={http://dx.doi.org/10.1007/s10701-017-0090-7},
	DOI={10.1007/s10701-017-0090-7},
	number={8},
	journal={Foundations of Physics},
	publisher={Springer Science and Business Media LLC},
	author={Appleby, Marcus and Flammia, Steven and McConnell, Gary and Yard, Jon},
	year={2017},
	month=apr, pages={1042–1059} }

@article{Bengtsson_2020,
		title={SICs: Some Explanations},
		volume={50},
		ISSN={1572-9516},
		url={http://dx.doi.org/10.1007/s10701-020-00341-9},
		DOI={10.1007/s10701-020-00341-9},
		number={12},
		journal={Foundations of Physics},
		publisher={Springer Science and Business Media LLC},
		author={Bengtsson, Ingemar},
		year={2020},
		month=apr, pages={1794–1808} }

\end{document}